\newcommand{\problemdef}[3]{
	\begin{center}
		\begin{boxedminipage}{.99\textwidth}
			\textsc{{#1}}\\[2pt]
			\begin{tabular}{ r p{0.8\textwidth}}
				\textit{~~~~Instance:} & {#2}\\
				\textit{Question:} & {#3}
			\end{tabular}
		\end{boxedminipage}
	\end{center}
}
\newcommand{\optproblemdef}[3]{
	\begin{center}
		\begin{boxedminipage}{.99\textwidth}
			\textsc{{#1}}\\[2pt]
			\begin{tabular}{ r p{0.8\textwidth}}
				\textit{~~~~Instance:} & {#2}\\
				\textit{Goal:} & {#3}
			\end{tabular}
		\end{boxedminipage}
	\end{center}
}
\newcommand{\NP}{{\sf NP}}
\newcommand{\sP}{{\sf P}}
\title{Connected Vertex Cover\\ for $(sP_1+P_5)$-Free Graphs\thanks{The authors were supported by The Leverhulme Trust (Grant RPG-2016-258). An extended abstract of the paper will appear in the proceedings of WG 2018~\cite{JPP18}.}
}
\author{Matthew Johnson
\and
Giacomo Paesani
\and
Dani\"el Paulusma}
\institute{Department of Computer Science, Durham University, UK\\
\texttt{\{matthew.johnson2,giacomo.paesani,daniel.paulusma\}@durham.ac.uk}
}
\begin{document}

\maketitle

\begin{abstract}
The {\sc Connected Vertex Cover} problem is to decide if~a graph~$G$ has a vertex cover of size at most~$k$ that induces a connected subgraph of~$G$. This is a well-studied problem, known to be \NP-complete for restricted graph classes, and, in particular, for $H$-free graphs if~$H$ is not a linear forest.
On the other hand, the problem is known to be polynomial-time solvable for $sP_2$-free graphs for any integer~$s\geq 1$. 
We give a polynomial-time algorithm to solve the problem for $(sP_1+P_5)$-free graphs for every integer~$s\geq~0$.
Our algorithm can also be used for the {\sc Weighted Connected Vertex Cover} problem.
\end{abstract}

\section{Introduction}\label{s-intro}

A set~$S$ of vertices in a graph~$G$ forms a {\it vertex cover} of~$G$ if every edge of~$G$ is incident with a vertex of~$S$.  The set $S$ is an {\it independent set} if no two vertices in~$S$ are adjacent. These definitions lead to two classical graph problems, which are both \NP-complete: the {\sc Vertex Cover} problem is to decide if a given graph~$G$ has a vertex cover of size at most~$k$ for a given integer~$k$; the {\sc Independent Set} problem is to decide if a given graph~$G$ has an independent set of size at least~$\ell$ for a given integer~$\ell$.  A set~$S$ of at least $k$ vertices of a graph $G$ on $n$ vertices is a vertex cover if and only if $V_G\setminus S$ is an independent set (of size at most~$n-k$). Hence  {\sc Vertex Cover} and {\sc Independent Set} are polynomially equivalent. A vertex cover of a graph~$G$ is connected if it induces a connected subgraph of~$G$. In our paper, we focus on the corresponding decision problem.

\problemdef{{\sc Connected Vertex Cover}}{a graph $G$ and an integer $k$.}{does $G$ have a connected vertex cover $S$ with $|S|\leq k$?} 

\noindent
In 1977, Garey and Johnson~\cite{GJ77} proved that  {\sc Connected Vertex Cover} is \NP-complete for planar graphs of maximum degree~4.  More recently, Priyadarsini and Hemalatha~\cite{PH08} and Fernau and Manlove~\cite{FM09} strengthened this result to 2-connected planar graphs of maximum degree~4 and planar bipartite graphs of maximum degree~4, respectively. Wanatabe, Kajita, and Onaga~\cite{WKO91} proved that {\sc Connected Vertex Cover} is \NP-complete even for 3-connected graphs. Very recently, Munaro~\cite{Mu17} proved the same for line graphs of planar cubic bipartite graphs and for planar bipartite graphs of arbitrarily large girth, and Li, Yang, and Wang~\cite{LYW17} showed \NP-completeness for 4-regular graphs.

We now turn to tractable cases.  Ueno, Kajitani, and Gotoh~\cite{UKG88} proved that {\sc Connected Vertex Cover} is polynomial-time solvable for graphs of maximum degree at most~3. Escoffier, Gourv\`es, and Monnot~\cite{EGM10} proved the same result for chordal graphs. As {\sc Vertex Cover} is also polynomial-time solvable for chordal graphs~\cite{Ga74}, the authors of~\cite{EGM10} proposed a general study on the complexity of {\sc Connected Vertex Cover} on graph classes for which {\sc Vertex Cover} is polynomial-time solvable. This leads us to the research question of our paper:

\medskip
\noindent
{\it For which classes of graphs do the complexities of {\sc Vertex Cover} and {\sc Connected Vertex Cover} coincide?}

\medskip
\noindent
Chiarelli, Hartinger, Johnson, Milanic, and Paulusma~\cite{CHJMP18} studied this question by considering classes of graphs characterized by a single forbidden induced subgraph~$H$. Such graphs are called $H$-free. They observed that the results of Munaro~\cite{Mu17} imply that {\sc Connected Vertex Cover} is \NP-complete for $H$-free graphs if $H$ contains a cycle or a claw. Using Poljak's construction~\cite{Po74}, {\sc Vertex Cover} is readily seen to be \NP-complete for graphs of arbitrarily large girth and thus for $H$-free graphs whenever $H$ contains a cycle. When $H$ is the claw, {\sc Vertex Cover} becomes polynomial-time solvable for $H$-free graphs~\cite{Mi80,Sh80}. Hence, there exist graphs~$H$ such that {\sc Connected Vertex Cover} and {\sc Vertex Cover} have different complexities when restricted to $H$-free graphs (assuming $\sP\neq \NP$);  see~\cite{Al04,BM18} for some more examples. 

So the complexity of {\sc Connected Vertex Cover} is known for $H$-free graphs unless $H$ is a linear forest (the disjoint union of one or more paths). Even the case where $H$ is a single path on $r$ vertices (denoted $P_r$) is settled neither for {\sc Vertex Cover} nor for {\sc Connected Vertex Cover}; it is not known if there exists an integer~$r$ such that {\sc Vertex Cover} or {\sc Connected Vertex Cover} is \NP-complete for $P_r$-free graphs. Lokshtanov, Vatshelle, and Villanger~\cite{LVV14}  proved that {\sc Independent Set}, and thus {\sc Vertex Cover}, is polynomial-time solvable for $P_5$-free graphs. Recently, Grzesik, Klimo\v{s}ov\'a, Pilipczuk, and Pilipczuk~\cite{GKPP17} extended this to $P_6$-free graphs. We also note that if {\sc Vertex Cover} is polynomial-time solvable on $H$-free graphs for some graph~$H$, then it is polynomial-time solvable on $(P_1+H)$-free graphs.  This follows from the observation (see, e.g.,~\cite{Mo12}) that to solve the complementary problem of {\sc Independent Set} on a $(P_1+H)$-free graph one  solves the problem on each $H$-free graph obtained by removing a vertex and all its  neighbours.

\begin{theorem}[\cite{GKPP17}]\label{t-vc}
For every $s\geq 0$, {\sc Vertex Cover} can be solved in polynomial time for $(sP_1+P_6)$-free graphs.
\end{theorem}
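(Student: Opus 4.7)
The plan is to proceed by induction on $s$, bootstrapping from the $P_6$-free case via the ``add an isolated vertex'' observation already explained in the excerpt. Since {\sc Vertex Cover} and {\sc Independent Set} are polynomially equivalent, it is enough to compute in polynomial time a maximum independent set in a $(sP_1+P_6)$-free graph~$G$.

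The base case $s=0$ is exactly the theorem of Grzesik, Klimo\v{s}ov\'a, Pilipczuk, and Pilipczuk~\cite{GKPP17} for $P_6$-free graphs, which I would invoke as a black box; this is where all the real content lives and I would not attempt to reprove it.

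For the inductive step, assume that {\sc Independent Set} is polynomial-time solvable on $H$-free graphs, where $H:=(s-1)P_1+P_6$, and suppose the input $G$ is $(sP_1+P_6)$-free. Since $sP_1+P_6 = P_1+H$, the graph $G$ is $(P_1+H)$-free. For each vertex $v\in V_G$ form $G_v := G-N[v]$; no induced copy of $H$ can live in $G_v$, for together with $v$ (which is nonadjacent to every vertex of $G_v$) such a copy would realise an induced $P_1+H$ in $G$. Hence, by the inductive hypothesis, a maximum independent set $S_v$ of $G_v$ can be computed in polynomial time, and $\{v\}\cup S_v$ is then a maximum independent set of $G$ among those containing~$v$. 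Returning the largest $\{v\}\cup S_v$ over all $v\in V_G$ produces a maximum independent set of~$G$, and the overall running time remains polynomial.

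The main obstacle is entirely the base case, whose proof in~\cite{GKPP17} rests on a nontrivial structural analysis of $P_6$-free graphs. The inductive step itself is a short reduction that exploits only the fact that deleting a closed neighbourhood in a $(P_1+H)$-free graph cannot leave an induced copy of $H$ behind, and so adds nothing deep beyond the $P_6$-free result.
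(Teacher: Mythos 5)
Your proposal is correct and matches exactly how the paper justifies this theorem: the $P_6$-free case is taken as a black box from~\cite{GKPP17}, and the extension to $(sP_1+P_6)$-free graphs is the standard observation (stated just before the theorem in the paper, attributed to~\cite{Mo12}) that one solves {\sc Independent Set} on each graph $G-N[v]$, which is $((s-1)P_1+P_6)$-free, and adds back $v$. Nothing to add.
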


\noindent
By using the concept of the price of connectivity~\cite{CCFS14,CL10,HJMP16}, Chiarelli et al.~\cite{CHJMP18} proved that {\sc Connected Vertex Cover} is polynomial-time solvable for $sP_2$-free graphs for any integer~$s\geq 1$. For {\sc Vertex Cover} this follows by combining two classical results~\cite{BY89,TIAS77} (as is well-known). No other complexity results are known for {\sc Connected Vertex Cover} for $H$-free graphs if $H$ is a linear forest.

\subsection{Our Contribution}  We continue the study of~\cite{CHJMP18,EGM10}, and in Sections~\ref{s-poly} and~\ref{s-main2}, we prove the following result, which includes polynomial-time solvability for $P_5$-free graphs.

\begin{theorem}\label{t-main}
For every $s\geq 0$, {\sc Connected Vertex Cover} can be solved in polynomial time for $(sP_1+P_5$)-free graphs.
\end{theorem}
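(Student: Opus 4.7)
The plan is to prove Theorem~\ref{t-main} by induction on $s$, with the base case $s=0$ ($P_5$-free graphs) being the main technical step. For the base case, I would rely on the classical Bacs\'o--Tuza structure theorem: every connected $P_5$-free graph contains either a dominating clique or a dominating induced $P_3$. The key observation is that if a connected dominating set $D$ is contained in some CVC $S$, then $V\setminus S$ is an independent set of $G$ disjoint from $D$; conversely, for any independent set $I\subseteq V\setminus D$, the set $V\setminus I$ is automatically a connected vertex cover---it is a vertex cover because $I$ is independent, and it is connected because $D\subseteq V\setminus I$ is already connected and dominates $V\setminus D$, so every vertex of $(V\setminus I)\setminus D$ has a neighbour in $D$. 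Consequently, for each fixed $D$ the minimum CVC containing $D$ has size $|V|-\alpha(G-D)$, and since $G-D$ is $P_5$-free, $\alpha(G-D)$ is computable in polynomial time by Theorem~\ref{t-vc}. The algorithm then enumerates candidate sets $D$: there are $O(n^3)$ induced $P_3$'s, while dominating cliques require a separate polynomial enumeration, e.g., iterating over each vertex $v$ and extracting appropriate cliques inside $N[v]$ using the $P_5$-free structure.

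For the inductive step, assume the theorem holds for $((s-1)P_1+P_5)$-free graphs and let $G$ be $(sP_1+P_5)$-free. If $G$ is $P_5$-free, the base case applies. Otherwise, fix any induced $P_5$ on vertex set $P$, and set $A:=V\setminus N[P]$. Since $G$ is $(sP_1+P_5)$-free, $G[A]$ cannot contain an independent set of size $s$: any such set, together with $P$, would induce $sP_1+P_5$ in $G$. Hence for every CVC $S$, the set $(V\setminus S)\cap A$ is an independent set of $G[A]$ of size at most $s-1$. I would enumerate the $O(n^{s-1})$ candidate subsets $J\subseteq A$ of size at most $s-1$; for each $J$, the vertices of $A\setminus J$ and those of $N_G(J)$ are forced into $S$, and only vertices in $N[P]\setminus N_G(J)$ remain free. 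The residual optimisation problem is then solved either by direct analysis, exploiting that the undetermined region is structurally restricted, or by reducing to a $((s-1)P_1+P_5)$-free subinstance and applying the inductive hypothesis.

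The main obstacle is the base case, particularly the dominating-clique branch: unlike induced $P_3$'s, dominating cliques can be arbitrarily large and cannot be listed exhaustively, so one needs a careful structural argument exploiting $P_5$-freeness to reduce to a polynomial family of candidate $D$'s that still realises an optimum CVC. A secondary subtlety is maintaining connectivity throughout the inductive step: forcing $A\setminus J$ and $N_G(J)$ into $S$ may interact delicately with the connectivity constraint, requiring additional case analysis of how vertices chosen inside $N[P]$ connect the forced part of $S$ into a single connected subgraph.
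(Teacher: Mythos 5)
There is a genuine gap, and it sits exactly where the paper's real work is. In your base case you compute the minimum CVC \emph{containing} a connected dominating set $D$ as $n-\alpha(G-D)$; this is correct and is precisely the paper's easy case (the guess $D^*=\emptyset$ in its Case~1). But your algorithm only returns the minimum over CVCs that fully contain one of your enumerated candidates $D$, and you never prove that some minimum CVC contains a dominating clique or dominating induced $P_3$ of $G$. Bacs\'o--Tuza gives a dominating clique or $P_3$ of $G$, not one lying inside an optimal solution (applying it to $G[S]$ for an optimal $S$ only yields a set dominating $S$, not $G$). A minimum CVC may omit a vertex $v$ of every candidate dominating clique; then all of $N(v)$ is forced into the cover, and connecting up this forced independent set is the hard subproblem. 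The paper resolves this not by enumerating dominating sets but by fixing a \emph{single} $D$ from its Lemma~\ref{l-2} and guessing the excluded subset $D^*\subseteq D$ (at most one vertex when $D$ is a clique); the resulting task of extending the forced independent set $J=N(D^*)$ to a smallest connected cover is the \textsc{Connected Vertex Cover Completion} problem, whose polynomial-time solvability (Theorem~\ref{t-ind2}) occupies all of Section~\ref{s-poly} (set-contractions, pseudo-dominating pairs and triples, Rules~1 and~2). You acknowledge the dominating-clique branch as ``the main obstacle'' but do not close it, and closing it is essentially the theorem.

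The inductive step has the same missing core. After guessing $J=(V\setminus S)\cap A$, you are left with minimizing over $X\subseteq N[P]\setminus N_G(J)$ such that the forced vertices together with $X$ form a \emph{connected} vertex cover. This residual problem is not an instance of \textsc{(Connected) Vertex Cover} on a $((s-1)P_1+P_5)$-free graph: the objective couples a free region with a large forced region through a global connectivity constraint, which is exactly why (as noted in Section~\ref{s-intro}) the standard ``delete $N[v]$ and recurse'' trick that lifts \textsc{Vertex Cover} from $H$-free to $(P_1+H)$-free graphs does not apply to the connected variant. Without an argument for this completion step, neither the base case nor the induction goes through.
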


In fact, both Lokshtanov et al.~\cite{LVV14} and Grzesik et al.~\cite{GKPP17} showed that a more general variant of {\sc Vertex Cover} is polynomial-time solvable for $P_5$-free graphs and $P_6$-free graphs, respectively. Namely, they considered the variant, where each vertex~$v$ of the input graph has an associated non-negative weight~$w_u$ and the goal is to find a vertex cover of total minimum weight. This result can be readily extended to $(sP_1+P_6)$-free graphs by using the same observation as before. In Section~\ref{s-weight} we show how to generalize Theorem~\ref{t-main} to hold for the weighted version of {\sc Connected Vertex Cover}.

\subsection{Our Method} 
It is easy to construct graphs with a minimum connected vertex cover that do not contain a minimum vertex cover; see 
the graph $G_1$ in Fig.~\ref{f-4ex}.
We also note that the difference
in size
 between a minimum vertex cover and a minimum connected vertex cover in an $(sP_1+P_5)$-
free graph is at most $3$ if $s=0$, and at most $3s+10$ if $s\geq 1$~\cite{HJMP16}.  We cannot exploit this property directly as that would require an algorithm to enumerate all minimum vertex covers in polynomial time. Moreover, the graph $G_2$ in Fig.~\ref{f-4ex} shows that even if this were possible, it is not immediately obvious how to proceed; one cannot necessarily hope to find a minimum connected vertex cover by extending a minimum vertex cover. As an extra complication, for {\sc Connected Vertex Cover} one cannot extend results on $H$-free graphs to results on $(sP_1+H)$-free graphs in a straightforward way (certainly one cannot use the technique for {\sc Vertex Cover} described before Theorem~\ref{t-vc}).

Our method is based on an analysis of the structure of dominating sets in $(sP_1+P_5)$-free graphs using a characterization of $P_5$-free graphs due to Bacs\'{o} and Tuza~\cite{BT90}. We translate the problem into a problem in which we  try to extend a partial vertex cover into a full connected vertex cover. We solve this extension variant of {\sc Connected Vertex Cover} by using Theorem~\ref{t-vc} (applied to the smaller class of $(sP_1+P_5)$-free graphs). We show how to do this in Section~\ref{s-poly} and then show how to use this result to prove Theorem~\ref{t-main} in Section~\ref{s-main2}. 

An important ingredient of our proof is that we reduce the size of the input graph by contracting an edge between two vertices $u$ and $v$ whenever we detect that $u$ and $v$ will both belong to the connected vertex cover. This idea stems from the observation that a connected graph $G$ on $n$ vertices has a connected vertex cover of size~$k$ if and only if $G$ contains the star $K_{1,n-k}$ on $n-k+1$ vertices  as a contraction.\footnote{If $G$ has a connected vertex cover $S$ of size~$k$, then contracting every edge between vertices in $S$ modifies $G$ into $K_{1,n-k}$. If $G$ contains $K_{1,n-k}$ as a contraction, then $V_G$ can be partitioned into sets $A$, $B_1$, \ldots, $B_{n-k}$ that each induce a connected graph such that there exists at least one edge between a vertex from $A$ and a vertex from $B_i$ for $i=1,\ldots,n-k$ and no edges between two vertices from different $B$-sets.  If $|B_i|\geq 2$, then we move every vertex that is adjacent to a vertex of $A$ to $A$ until we have only one vertex in $B_i$ left. This gives us a connected vertex cover of size~$k$.} 

\usetikzlibrary{patterns}
\begin{figure}
\begin{subfigure}{.2\textwidth}
\begin{center}
\begin{tikzpicture}[scale=0.6]
\node at (-1,2) {$G_1$};
\draw (-1,-1) -- (-1,1) -- (0,2) -- (1,1) -- (1,-1) -- (0,-2) -- (-1,-1) (-1,-1) -- (1,1) (-1,1) -- (1,-1);
\draw [fill=white] (-1,1) circle [radius=5pt]
			(1,1) circle [radius=5pt]
			(0,-2) circle [radius=5pt];
\draw [fill=black] (0,2) circle [radius=5pt]
		      (-1,-1) circle [radius=5pt]
		      (1,-1) circle [radius=5pt];
\end{tikzpicture}
\end{center}
\end{subfigure}%
\begin{subfigure}{.3\textwidth}
\begin{center}
\begin{tikzpicture}[scale=0.6]	
\node at (-1,2) {$G_1$};

\draw (-1,-1) -- (-1,1) -- (0,2) -- (1,1) -- (1,-1) -- (0,-2) -- (-1,-1);
\draw (-1,-1) -- (1,1) (-1,1) -- (1,-1);
\draw [fill=white] (0,2) circle [radius=5pt]
			(0,-2) circle [radius=5pt];
\draw [fill=black] (-1,-1) circle [radius=5pt]
		      (1,-1) circle [radius=5pt]
			(1,1) circle [radius=5pt]
			(-1,1) circle [radius=5pt];
\end{tikzpicture}
\end{center}
\end{subfigure}%
\begin{subfigure}{.2\textwidth}
\begin{center}
\begin{tikzpicture}[scale=0.6]
\node at (-1.8,2) {$G_2$};
\draw  (1,2) -- (-1,2) -- (-1,0) -- (-1,-2) --  (-1,-2) -- (1,-2) -- (1,0) -- (-1,0);
\draw [fill=white] (-1,2) circle [radius=5pt]
			(1,0) circle [radius=5pt]
			(-1,-2) circle [radius=5pt];
\draw [fill=black] (1,2) circle [radius=5pt]
		      (-1,0) circle [radius=5pt]
		      (1,-2) circle [radius=5pt];
\end{tikzpicture}
\end{center}
\end{subfigure}%
\begin{subfigure}{.3\textwidth}
\begin{center}
\begin{tikzpicture}[scale=0.6]
\node at (-1.8,2) {$G_2$};
\draw  (-1,2) -- (1,2) (-1,-2) -- (1,-2) -- (1,0);
\draw (-1,2) -- (-1,0) -- (-1,-2) (-1,0) -- (1,0);
\draw [fill=white] (1,-2) circle [radius=5pt]
			(1,2) circle [radius=5pt];
\draw [fill=black] (-1,-2) circle [radius=5pt]
		      (-1,0) circle [radius=5pt]
		      (-1,2) circle [radius=5pt]
		      (1,0) circle [radius=5pt];
\end{tikzpicture}
\end{center}
\end{subfigure}
\caption{An example of a $P_5$-free graph~$G_1$ with a minimum connected vertex cover (coloured black in the right-hand drawing) that contains no minimum vertex cover (there are exactly two, indicated by the sets of black and white vertices in the left-hand drawing). The graph $G_2$ is  an example of a $(P_1+P_5)$-free graph with a minimum vertex cover (coloured black in the left hand drawing) that is not contained in any minimum connected vertex cover; clearly any connected vertex cover that contains it has at least five vertices and an example of a minimum connected vertex cover on four vertices is indicated by the vertices coloured black in the right-hand drawing.}\label{f-4ex}
\end{figure}
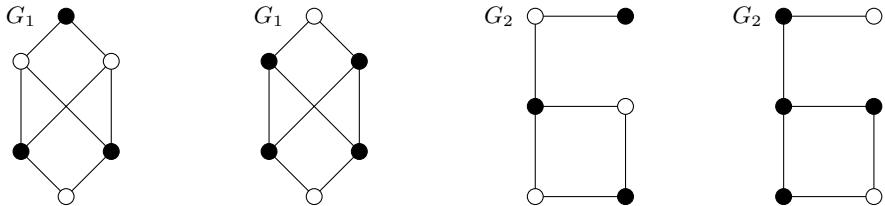

\subsection{Related Work on $(P_1+P_r)$-Free Graphs and $P_r$-Free Graphs}

The class of $P_5$-free graphs has also been studied for other problems than {\sc Vertex Cover} and {\sc Connected Vertex Cover}. In fact the computational complexity of many of these problems jumps from polynomial-time solvable on $P_r$-free graphs to \NP-complete on $(P_1+P_r,P_{r+1})$-free graphs.
For instance, {\sc Colouring} is  polynomial-time solvable for $P_4$-free graphs but is \NP-complete for 
$(P_1+P_4,P_5)$-free graphs~\cite{KKTW01}. 
Later, Ho{\`{a}}ng et~al.~\cite{HKLSS10} proved that {\sc $k$-Colouring} is polynomial-time solvable for $P_5$-free graphs for every $k\geq 1$. Afterwards, this result was extended to $(sP_1+P_5)$-free graphs for any $s\geq 0$~\cite{CGKP15}. 
A clique transversal of a graph $G$ is a set $S\subseteq V_G$ such that $S$ contains a vertex of each maximal clique of $G$ (note that a vertex cover can be viewed as a transversal which contains a vertex of each 2-vertex clique).  It is known that computing a smallest clique transversal can be done in polynomial time for comparability graphs~\cite{BNP96} and thus for $P_4$-free graphs, but is \NP-hard for cobipartite graphs~\cite{GP00} and thus for $(P_1+P_4,P_5)$-free graphs.
The {\sc Longest Path Contractibility}~\cite{HPW09} is to determine the length of a longest path to which a graph can  be contracted. This problem is polynomial-time solvable for $(P_1+P_5)$-free graphs~\cite{KP} but \NP-hard for $P_6$-free graphs
~\cite{HPW09}.
Golovach and Heggernes~\cite{GH09} gave a fixed-parameter tractable algorithm for {\sc Choosability} on $P_5$-free graphs when parameterized by the size of the lists of admissible colours. Recently, Bonamy et al.~\cite{BDFJP17} proved that the problems {\sc Independent Feedback Vertex Set} and {\sc Independent Odd Cycle Transversal} are polynomial-time solvable for $P_5$-free graphs.

\section{Preliminaries}\label{s-pre}

We consider only finite, undirected graphs without multiple edges or self-loops. Let $G=(V,E)$ be a graph. We let $n=|V|$. For a set $S\subseteq V$, the graph $G[S]$ denotes the subgraph of $G$ induced by $S$, and we say that $S$ is {\it connected} if~$G[S]$ is connected. We write $G-S=G[V\setminus S]$, and if $S=\{u\}$ we may simply write $G-u$. For a vertex $u\in V$, we write $N_G(u)=\{v \;|\; uv\in E\}$ to denote the neighbourhood of~$u$. For a set $S\subseteq V$, we write $N_G(S)=(\bigcup_{u\in S}N_G(u)) \setminus S$.  A subset~$D\subseteq V$ is a {\it dominating} set of~$G$ if every vertex of $V\setminus D$ is adjacent to at least one vertex of $D$.  An edge~$uv$ of a graph $G=(V,E)$ is {\it dominating} if $\{u,v\}$ is dominating. The \emph{contraction} of an edge $uv\in E$ is the operation that replaces $u$ and $v$ by a new vertex adjacent to precisely those vertices of $V \setminus \{u,v\}$ adjacent to $u$ or $v$ in $G$.  Recall that for a graph~$H$, we say that another graph~$G$ is {\it $H$-free} if it does not contain an induced subgraph isomorphic to $H$. The {\em disjoint union} $G+\nobreak H$ of two vertex-disjoint graphs~$G$ and~$H$ is the graph $(V_G\cup V_H, E_G\cup E_H)$. The disjoint union of~$r$ copies of a graph~$G$ is denoted by~$rG$. A  {\em linear forest} is the disjoint union of one or more paths. The following, straightforward lemma holds for any linear forest, but,  as we repeatedly make use of it, it is convenient to state in these terms.

\begin{lemma}\label{l-contract}
Let $G$ be a connected $(sP_1+P_5)$-free graph for some $s\geq 0$. The graph obtained from $G$ after contracting an edge is also connected and $(sP_1+P_5)$-free.
\end{lemma}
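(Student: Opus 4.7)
The plan is to verify the two properties separately, with connectivity being immediate and $(sP_1+P_5)$-freeness reducing to a small case analysis on how the contracted vertex sits inside a hypothetical forbidden induced subgraph. If $G$ is connected and we contract an edge $uv$ to a new vertex $w$ to obtain $G'$, then any path in $G$ between two vertices of $V_G\setminus\{u,v\}$ can be rewritten as a walk in $G'$ by replacing each occurrence of $u$ or $v$ by $w$, and $w$ itself is reachable from any preserved vertex, so $G'$ is connected.

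For $(sP_1+P_5)$-freeness I would argue by contradiction: suppose $G'$ contains an induced copy $H'$ of $sP_1+P_5$. If $w\notin V_{H'}$, then $V_{H'}\subseteq V_G\setminus\{u,v\}$ and all adjacencies among these vertices are preserved by the contraction, so $H'$ is already induced in $G$, a contradiction. So assume $w\in V_{H'}$, and set $A=N_G(u)\cap(V_{H'}\setminus\{w\})$ and $B=N_G(v)\cap(V_{H'}\setminus\{w\})$; by the definition of contraction, $A\cup B=N_{H'}(w)$.

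If $w$ has degree at most one in $H'$---that is, $w$ is one of the $s$ isolated vertices or an endpoint of the $P_5$---then $|N_{H'}(w)|\leq 1$, so at least one of $A,B$ equals $N_{H'}(w)$, and replacing $w$ in $V_{H'}$ by the corresponding endpoint of $uv$ yields an induced copy of $sP_1+P_5$ in $G$, a contradiction. The main obstacle is the case where $w$ is an internal vertex of the $P_5$, so $N_{H'}(w)=\{x,y\}$ with $x,y$ non-adjacent. If $\{x,y\}\subseteq A$ or $\{x,y\}\subseteq B$ the same replacement still works, so suppose $x\in A\setminus B$ and $y\in B\setminus A$. Writing the $P_5$ of $H'$ as $p_1,p_2,p_3,p_4,p_5$ in order, inserting both $u$ and $v$ in place of $w$ produces an induced $P_6$ in $G$: if $w=p_3$ this is the path on $p_1,p_2,u,v,p_4,p_5$, and the cases $w=p_2$ and $w=p_4$ are symmetric. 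The non-edges required for this $P_6$ to be induced follow because $u$ and $v$ inherit the non-neighbours of $w$ inside $V_{H'}$ (as $A\cup B=N_{H'}(w)$), together with $uy,vx\notin E_G$. Taking any five consecutive vertices of this $P_6$ as the new $P_5$, together with the $s$ isolated vertices of $H'$, then gives an induced $sP_1+P_5$ in $G$: the $s$ isolated vertices remain non-adjacent to both $u$ and $v$ precisely because they are non-adjacent to $w$ in $G'$. This contradicts $(sP_1+P_5)$-freeness of $G$, completing the proof.
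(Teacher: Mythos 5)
Your proof is correct and complete. The paper does not actually prove this lemma at all -- it is stated as a ``straightforward'' fact about linear forests -- so your write-up simply supplies the routine verification the authors omit. Your case analysis is the expected one and is airtight: the only delicate case is when the contracted vertex $w$ is an internal vertex of the $P_5$ with its two $H'$-neighbours split between $N_G(u)$ and $N_G(v)$, and your observation that expanding $w$ back into the edge $uv$ yields an induced $P_6$ (whose five consecutive vertices, together with the $s$ isolated vertices, which avoid $N_G(u)\cup N_G(v)$ since they avoid $N_{G'}(w)$, give an induced $sP_1+P_5$ in $G$) is exactly the reason the lemma holds for linear forests but not for arbitrary graphs $H$.
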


We will use the following result of Bacs\'{o} and Tuza~\cite{BT90} as a lemma in our proof. 

\begin{lemma}[\cite{BT90}]\label{l-bt}
Every connected $P_5$-free graph $G$ has a dominating set~$D$, computable in $O(n^3)$ time, that induces either a $P_3$ or a complete graph. 
\end{lemma}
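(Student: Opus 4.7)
The plan is to prove the structural claim by a case analysis anchored on a maximum clique, and then extract the $O(n^3)$ algorithm from the constructive proof. First, if $G$ has no induced $P_3$ then $G$ is a disjoint union of cliques and, being connected, is itself a complete graph; any vertex forms the required dominating set. So assume $G$ contains an induced $P_3$ and fix a maximum clique $K$ of $G$. If $K$ dominates $V(G)$, output $K$. Otherwise, let $U := V(G)\setminus N[K]$, which is nonempty.

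The key structural step is to show, using $P_5$-freeness and the maximality of $K$, that every $u\in U$ has distance exactly $2$ from $K$ in $G$. Take a shortest path $u=y_0,y_1,\ldots,y_\ell$ with $y_\ell\in K$ and $\ell\geq 2$. If $\ell\geq 4$ the path itself already contains an induced $P_5$. If $\ell=3$, then $y_2\notin K$, and by maximality of $K$ there exists $w\in K$ with $w\not\sim y_2$ (else $K\cup\{y_2\}$ would be a larger clique); then $y_0 y_1 y_2 y_3 w$ is an induced $P_5$, using that $y_0\in U$ forces $y_0\not\sim w$, that $y_1\sim w$ would contradict $\ell=3$, and the remaining non-adjacencies follow from the shortest-path condition. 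Hence $\ell=2$, and for any $u\in U$ one obtains an induced $P_3$ of the form $u x w$ with $x\in N(u)\cap(N(K)\setminus K)$ and $w\in N(x)\cap K$. To promote this to a \emph{dominating} $P_3$, I would choose the triple extremally, maximising $|N(x)\cap K|$ among neighbours of $u$ in $N(K)\setminus K$ and then breaking ties by maximising $|N[\{u,x,w\}]|$.

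The final and most delicate step is to verify that the extremally-chosen triple $P:=\{u,x,w\}$ dominates $V(G)$, by contradiction. If $z\notin N[P]$, the case $z\in K$ is immediate since $z,w$ lie in a clique, and the case $z\in U$ is handled by combining the distance-$2$ path from $z$ with the triple to exhibit an induced $P_5$. The main obstacle is the remaining case $z\in N(K)\setminus K$: one selects $w'\in N(z)\cap K\setminus\{w\}$ and inspects the candidate induced $P_5$ on $z,w',w,x,u$, splitting subcases according to whether $w'\sim x$; if $w'\not\sim x$ the $P_5$ is induced directly, and if $w'\sim x$ for every such $w'$ one exploits the extremal choice of the triple to swap $x$ with a more favourable vertex and produce a contradictory induced $P_5$. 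For the $O(n^3)$ algorithmic bound, the proof is constructive: enumerate the $O(n^3)$ candidate vertex triples, testing each as a dominating induced $P_3$ or small complete graph in constant time per vertex with the adjacency matrix, and additionally test the single candidate clique arising from $K$; the structural guarantee ensures that at least one candidate succeeds, and a careful implementation carries out the whole procedure in $O(n^3)$ time.
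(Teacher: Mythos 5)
The paper itself offers no proof of this lemma: it is imported verbatim from Bacs\'o and Tuza~\cite{BT90}, with only a remark that the $O(n^3)$ computability is not difficult and also follows from Camby and Schaudt~\cite{CS16}. Your attempt must therefore stand on its own, and unfortunately its central structural claim is false. You anchor everything on an arbitrary maximum clique $K$ and assert that if $K$ fails to dominate, then some induced $P_3$ of the form $u$--$x$--$w$ with $u\in U$, $x\in N(u)\cap(N(K)\setminus K)$, $w\in N(x)\cap K$ dominates. Counterexample: let $V=\{w_1,w_2,w_3,w_4,x,u,z_1,z_2\}$, where $\{w_1,w_2,w_3,w_4\}$ is a clique, $x$ is adjacent to $w_1,w_2,w_3$ and to $u$, $z_1$ is adjacent only to $w_1$, and $z_2$ only to $w_2$. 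This graph is $P_5$-free (the three internal vertices of an induced $P_5$ must have degree at least $2$, hence lie in $\{w_1,\ldots,w_4,x\}$, whose only non-edge is $w_4x$, and one checks no fifth vertex completes such a path), and $K=\{w_1,w_2,w_3,w_4\}$ is a maximum clique. $K$ does not dominate ($u$ is at distance $2$), $U=\{u\}$, $N(u)=\{x\}$, and each candidate triple $\{u,x,w_i\}$, $i\in\{1,2,3\}$, misses $z_1$ or $z_2$, regardless of your extremal tie-breaking. In fact this graph has \emph{no} dominating induced $P_3$ at all; the conclusion of the lemma is realized only by a dominating clique such as $\{x,w_1,w_2\}$, which is neither your fixed $K$ nor obtainable by swapping a single vertex of your triple. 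So the ``most delicate step'' you defer to an unspecified exchange argument is exactly where the proof breaks: when every $w'\in N(z)\cap K$ is adjacent to $x$, no single choice of $w$ need serve all such $z$ simultaneously, and the whole strategy of growing the dominating set out of one prescribed maximum clique cannot be repaired locally.

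There is a second, independent problem with the algorithmic extraction. \textsc{Maximum Clique} is \NP-hard on $P_5$-free graphs (its complement problem is \textsc{Independent Set} on $\overline{P_5}$-free, i.e.\ house-free, graphs, which include all graphs of girth at least five), so ``fix a maximum clique $K$'' and ``test the single candidate clique arising from $K$'' are not polynomial-time operations, let alone $O(n^3)$ ones; moreover, testing each of the $O(n^3)$ triples for domination costs $\Theta(n)$ apiece, giving $O(n^4)$ rather than $O(n^3)$ without a genuinely different organisation of the computation. The known proofs avoid both difficulties by not fixing a clique in advance: one starts from a (connected) dominating set chosen minimal with respect to suitable criteria and uses $P_5$-freeness to transform it into a dominating clique or a dominating $P_3$, which is what makes the argument constructive and efficiently implementable.
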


\noindent
Note that it is not difficult to compute the set~$D$ in $O(n^3)$ time; this also follows from a more general result of Camby and Schaudt~\cite{CS16} for $P_r$-free graphs ($r\geq 1$).

\section{An Auxiliary Problem}\label{s-poly}

In this section we prove that a variant of {\sc Connected Vertex Cover} can be solved in polynomial time for $(sP_1+P_5)$-free graphs for every integer~$s\geq 0$. To prove Theorem~\ref{t-main} we will solve a polynomial number of instances of this variant, which we show can be solved in polynomial time for $(sP_1+P_5)$-free graphs for every $s\geq 0$.  We introduce the variant by first describing its input. Let $G$ be a connected graph, let $J\subseteq V_G$ be a subset of the vertex set of $G$ and let $y$ be a vertex of $J$. We call the triple $(G,J,y)$ {\it cover-complete} if it has the following properties (see also Fig.~\ref{f-ccc}):
\begin{itemize}
\item [(A)] $J$ is an independent set; 
\item [(B)] $y$ is adjacent to every vertex of $G-J$;
\item [(C)] the neighbours of each vertex in $J\setminus \{y\}$ form an independent set in $G-J$.
\end{itemize}
We now describe the problem.

\optproblemdef{{\sc Connected Vertex Cover Completion}}{a cover-complete triple $(G,J,y)$.}{find a smallest connected vertex cover $S$ of $G$ such that $J\subseteq S$.} 

\noindent
We will show how to solve  this problem in polynomial time for $(sP_1+P_5)$-free graphs for any $s\geq 0$. We first make some further definitions and then prove a number of lemmas.

Let $(G,J,y)$ be a cover-complete triple, where $G$ is a connected $(sP_1+P_5)$-free graph. For a vertex  $w\in N_G(J\setminus \{y\})$, we write $J_w= N_G(w)\cap J$. Note that, by (B), $y\in J_w$. Let $G'$ be the graph obtained from $G$ by contracting every edge of $G[J_w\cup \{w\}]$.  As $G[J_w\cup \{w\}]$ is connected, contracting its edges reduces it to a single vertex which we denote $y_w$. We say that we have {\it set-contracted} $G$ into $G'$ via $w$ and that we {\it contracted} $J_w\cup \{w\}$ into $y_w$;
see Fig.~\ref{f-ccc} for an example. 

\begin{figure}
\begin{subfigure}{.5\textwidth}
\begin{center}
\begin{tikzpicture}[scale=0.5]
\draw [gray!20!white, fill=gray!20!white] (-5.5,-4) rectangle (5.5,-2);
\draw [gray!20!white, fill=gray!20!white] (-5.5,2) rectangle (5.5,4);
\draw [black, dotted] (-5.25,-3.75) rectangle (0.25,-2.25);
\node [left] at (-5.5,3) {$L$};
\node [left] at (-5.5,-3) {$J$};
\filldraw [black] (-5,-3) circle [radius=3pt] 
		(-2.5,-3) circle [radius=3pt]
		(0,-3) circle [radius=3pt]
		(2.5,-3) circle [radius=3pt]
		(5,-3) circle [radius=3pt]
		(-3,5) circle [radius=3pt]
		(-5,3) circle [radius=3pt]
		(-1,3) circle [radius=3pt]
		(1,3) circle [radius=3pt]
		(3,3) circle [radius=3pt]
		(5,3) circle [radius=3pt];		
\draw [black] (-5,-3) -- (-3,5) (-5,-3) -- (-5,3) (-5,-3) -- (-1,3) (-5,-3) -- (1,3) (-5,-3) -- (3,3) (-5,-3) -- (5,3) (3,3) -- (5,3) (-5,3) -- (-3,5)
		(-5,3) -- (-2.5,-3) (-5,3) -- (0,-3) (-1,3) -- (-2.5,-3) (-1,3) -- (2.5,-3) (1,3) -- (-2.5,-3) (1,3) -- (0,-3) (1,3) -- (5,-3) (3,3) -- (-2.5,-3) (5,3) -- (0,-3) (5,3) -- (2.5,-3);
\node [below right] at (-5,-3) {$y$};
\node [above] at (-5,3) {$w$};
\node at (-1,-3) {$J_w$};
\end{tikzpicture}
\end{center}
\end{subfigure}%
\begin{subfigure}{.5\textwidth}
\begin{center}
\begin{tikzpicture}[scale=0.5]
\draw [gray!20!white, fill=gray!20!white] (-5.5,-4) rectangle (5.5,-2);
\draw [gray!20!white, fill=gray!20!white] (-5.5,2) rectangle (5.5,4);
\node [left] at (6.5,3) {$L'$};
\node [left] at (6.5,-3) {$J'$};
\filldraw [black] (-2.5,-3) circle [radius=3pt]
		(2.5,-3) circle [radius=3pt]
		(5,-3) circle [radius=3pt]
		(-3,5) circle [radius=3pt]
		(-1,3) circle [radius=3pt]
		(1,3) circle [radius=3pt]
		(3,5) circle [radius=3pt]
		(5,3) circle [radius=3pt];
\draw [black] (-2.5,-3) -- (-3,5) (-2.5,-3) -- (-1,3) (-2.5,-3) -- (1,3) (-2.5,-3) -- (3,5) (-2.5,-3) -- (5,3) (-1,3) -- (2.5,-3) -- (5,3) -- (3,5) (1,3) -- (5,-3);
\node [below] at (-2.5,-3) {$y_w$};
\end{tikzpicture}
\end{center}
\end{subfigure}
\caption{An example of a cover-complete triple $(G,J,y)$ and the cover-complete triple $(G',J',y_w)$ obtained from set-contracting~$G$ via vertex~$w$. 
The sets $J'=(J\setminus J_w)\cup \{y_w\}$, $L=N_G(J\setminus \{y\})$ and $L'=N_{G'}(J'\setminus \{y_w\})$ are also displayed (the latter two sets will be formally introduced later).}\label{f-ccc}
\end{figure}
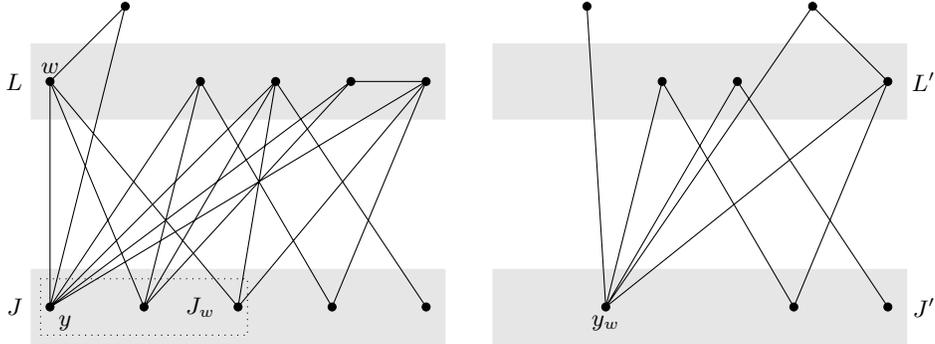

The following lemma is crucial.

\begin{lemma}\label{l-crucial}
Let $(G,J,y)$ be a cover-complete triple, where $G$ is a connected $(sP_1+P_5)$-free graph for some $s\geq 0$. Let $w\in N_G(J\setminus \{y\})$, and let $G'$ be the graph obtained from $G$ after set-contracting via~$w$. Let $J'=(J\setminus J_w)\cup \{y_w\}$ and $y'=y_w$. Then the following statements hold:
\begin{itemize}
\item [1.] $G'$ is a connected $(sP_1+P_5)$-free graph;\\[-10pt]
\item [2.] $(G',J',y')$ is a  cover-complete triple;\\[-10pt]
\item [3.] A set $S\subseteq V_G$ is a (smallest) connected vertex cover of $G$ that contains $J\cup \{w\}$ if and only if $(S\setminus (J\cup \{w\}))\cup J'$ is a (smallest) connected vertex cover of $G'$ that contains $J'$.
\end{itemize}
\end{lemma}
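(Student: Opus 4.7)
The plan is to verify the three parts of the lemma in order. Parts~1 and~2 are bookkeeping around what set-contraction does to $G$; the substance lies in part~3, and specifically in the reverse direction of the bijection between connected vertex covers.

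For part~1, I would first observe that $G[J_w\cup\{w\}]$ is a star with centre $w$: the set $J_w\subseteq J$ is independent by property~(A), and by definition every vertex of $J_w$ is adjacent to $w$. In particular $G[J_w\cup\{w\}]$ is connected, so set-contracting $J_w\cup\{w\}$ into $y_w$ is a sequence of $|J_w|$ edge contractions; iterating Lemma~\ref{l-contract} then yields part~1. For part~2, the key identity is $V_{G'}\setminus J' = V_G\setminus (J\cup\{w\})$. I would check each of (A), (B), (C) for $(G',J',y')$: (A) holds because the only new edges of $G'$ are incident to $y_w$, and no vertex of $J_w\cup\{w\}$ has a neighbour in $J\setminus J_w$ in $G$ (by independence of $J$ and by the definition of $J_w$); (B) holds because every $u\in V_{G'}\setminus J'$ lies in $V_G\setminus J$, so $uy\in E_G$ by (B) for $G$ and hence $uy_w\in E_{G'}$ since $y\in J_w$; and (C) holds because the $G'$-neighbours of $v\in J\setminus J_w$ in $G'-J'$ form a subset of $N_G(v)\cap(V_G\setminus J)$, which is independent by (C) for $G$.

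For part~3, I would define the correspondence $\Phi(S)=(S\setminus(J\cup\{w\}))\cup J'$ with inverse $\Psi(S'')=(S''\setminus J')\cup J\cup\{w\}$; a direct count yields $|\Phi(S)|=|S|-|J_w|$, so once $\Phi$ is shown to send connected vertex covers of $G$ containing $J\cup\{w\}$ to connected vertex covers of $G'$ containing $J'$ and vice versa, minimality is automatically preserved. The forward direction is direct: every edge of $G'$ either comes from an edge of $G$ disjoint from $J_w\cup\{w\}$ (and is covered because its endpoints lie in $\Phi(S)$) or is incident to $y_w\in\Phi(S)$; and $G'[\Phi(S)]$ is obtained from $G[S]$ by contracting edges all lying inside the connected set $J_w\cup\{w\}\subseteq S$, so it remains connected.

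The main obstacle is the reverse direction: given a connected vertex cover $S''\supseteq J'$ of $G'$, I must show that $S=\Psi(S'')$ induces a connected subgraph of $G$. The vertex-cover property is routine, since every edge of $G$ is either incident to $J\cup\{w\}\subseteq S$ or survives as an edge of $G'$ covered by $S''\setminus J'\subseteq S$. For connectedness I would use $y\in J_w\subseteq S$ as a hub: by (B) for $G$, $y$ is adjacent in $G$ to every vertex of $(V_G\setminus J)\cap S$, which includes $\{w\}\cup(S''\setminus J')$; combined with the star $G[J_w\cup\{w\}]$, this shows that $\{y\}\cup J_w\cup\{w\}\cup(S''\setminus J')$ induces a connected subgraph of $G[S]$. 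For any remaining $v\in J\setminus J_w\subseteq J'\subseteq S''$, I would take a path from $v$ to $y_w$ inside the connected graph $G'[S'']$ and lift it to a walk in $G[S]$, replacing the edge incident to $y_w$ by an edge into an appropriate witness vertex of $J_w\cup\{w\}\subseteq S$; this places $v$ in the same connected component of $G[S]$ as the hub.
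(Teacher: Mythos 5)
Your proposal is correct and follows essentially the same route as the paper: part~1 via Lemma~\ref{l-contract}, part~2 by checking (A)--(C) directly using the fact that $y\in J_w$, and part~3 via the contraction/uncontraction correspondence between covers. The only difference is that you spell out the connectivity argument for the reverse direction of part~3 (using $y$ as a hub via property~(B)), which the paper leaves implicit in the phrase ``uncontracting the edges of $G[J_w\cup\{w\}]$ yields a connected vertex cover''.
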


\begin{proof}
We will prove 1-3 separately.

\medskip
\noindent
{\emph 1.} By Lemma~\ref{l-contract}, $G'$ is connected and $(sP_1+P_5)$-free. This proves 1.

\medskip
\noindent
{\emph 2.} We will prove (A)-(C) for $(G',J',y')$. Before we do this we first observe the following. As (B) holds for $(G,J,y)$, we find that $y\in J$ is adjacent to $w$ in $G$. Hence $y$ belongs to $J_w$ and thus to $J_w\cup \{w\}$, which is contracted to the single vertex~$y'$ in $G'$. Hence, $y$ is not in $G'$ and its role has been taken over by $y'$, as we show below.

We first prove (A). As $J$ is an independent  set in $G$, we find that $J\setminus J_w$ is an independent set in $G'$. For contradiction, suppose that $y'$ is adjacent to a vertex in $J\setminus J_w$. Then there is an edge between a vertex of  $J\setminus J_w$ and a vertex of $J_w\cup \{w\}$ in $G$. However, this not possible as $J$ is independent in $G$, and thus every edge in $G[J\cup \{w\}]$ is incident with~$w$. Hence $J'=(J\setminus J_w) \cup \{y'\}$ is an independent set in $G'$. This proves (A).

We now prove (B). Recall that $y$ belongs to $J_w\cup \{w\}$, which is contracted to~$y'$ in $G'$. Hence, as $y$ is adjacent to every vertex of $G-J$ in $G$, we find that $y'$ is adjacent to every vertex of $G'-J'$. This proves (B).

Finally we prove (C). Let $x\in J'\setminus \{y'\}$. Then $x$ is not adjacent to $y'$, as we showed above that $J'$ is an independent set in $G'$. Then $N_{G'}(x)=N_G(x)$ is an independent set, as (C) holds for $(G,J,y)$. This proves (C) and 2.

\medskip
\noindent
{\emph 3.} Any connected vertex cover $S$ of $G$ that contains $J\cup \{w\}$ contains every vertex of $J_w\cup \{w\}$. Hence contracting $J_w\cup \{w\}$ to $y'$ yields a connected vertex cover $(S\setminus (J\cup \{w\}))\cup J'$ of~$G'$ that contains $J'$. Any connected vertex cover $S'$ of $G'$ that contains $J'$ contains $y'$. Hence uncontracting the edges of $G[J_w\cup \{w\}]$ yields a connected vertex cover $(S'\cup J\cup \{w\})\setminus J'$ of $G$ that contains $J\cup \{w\}$. This proves~3.\qed
\end{proof}

Let $(G,J,y)$ be a  cover-complete triple. We define $L_J=N_G(J\setminus \{y\})$.  If there is no ambiguity, we will just write $L=L_J$
(see also Fig.~\ref{f-ccc}).
Note that, by (C), $L$ is the union of a number of independent sets, but $L$ itself might not be independent. However we can deduce the following lemma, which follows immediately from~(C).

\begin{lemma}\label{l-ind}
Let $(G,J,y)$ be a  cover-complete triple. If $w_1$ and $w_2$ are two adjacent vertices in $L$, then no vertex of $J\setminus \{y\}$ is adjacent to both $w_1$ and $w_2$.
\end{lemma}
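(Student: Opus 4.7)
The plan is to derive the statement directly from condition (C), exactly as the paper hints. The key preliminary observation is that the set $L = N_G(J\setminus\{y\})$ is entirely contained in $V_G\setminus J$; this is because (A) says $J$ is independent, so no vertex of $J$ can be a neighbour of a vertex of $J\setminus\{y\}$. In particular, the two adjacent vertices $w_1,w_2\in L$ both lie in $G-J$, and so the edge $w_1w_2$ is an edge of $G-J$.

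Next I would argue by contradiction. Suppose there exists $x\in J\setminus\{y\}$ adjacent to both $w_1$ and $w_2$. Then $\{w_1,w_2\}\subseteq N_G(x)$, and since $w_1,w_2\in V_G\setminus J$, this pair sits inside the set of neighbours of $x$ in $G-J$. By (C) applied to $x$, this neighbourhood is an independent set in $G-J$, so $w_1w_2$ cannot be an edge of $G-J$. This contradicts what we just established, so no such $x$ exists.

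There is no real obstacle here: the statement is essentially a restatement of (C) once one observes that $L$ avoids $J$. The only point worth being careful about is confirming that $w_1,w_2\notin J$, so that the edge $w_1w_2$ survives in $G-J$ and (C) can legitimately be invoked; this is immediate from independence of $J$.
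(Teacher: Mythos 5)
Your proof is correct and follows exactly the route the paper intends: the paper gives no explicit proof, stating only that the lemma ``follows immediately from~(C)'', and your argument simply spells out that deduction (including the minor but worthwhile check that $w_1,w_2\notin J$ so that (C) applies to the edge $w_1w_2$ in $G-J$).
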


We introduce two key definitions for a cover-complete triple $(G,J,y)$.  Two vertices $w_1,w_2\in L$ form a {\it pseudo-dominating pair}~if 
\begin{itemize}
\item $w_1$ and $w_2$ are non-adjacent; 
\item $w_1$ has a neighbour $x_1\in J$ not adjacent to $w_2$; and 
\item $w_2$ has a neighbour $x_2\in J$ not adjacent to $w_1$. 
\end{itemize}
\noindent
Three  vertices $w_1,w_2,w_3\in L$ form a {\it pseudo-dominating triple} if
\begin{itemize}
\item $w_1$ is adjacent to neither $w_2$ nor $w_3$; 
\item $w_2$ and $w_3$ are adjacent; 
\item $J$ contains two distinct vertices $x_1$ and $x_2$ such that 
\begin{itemize}
\item $x_1\in N_G(w_1)\setminus N_G(\{w_2,w_3\})$ and 
\item $x_2\in (N_G(w_1)\cap N_G(w_2))\setminus N_G(w_3)$.
\end{itemize}
\end{itemize}
See the illustrations in Fig.~\ref{f-pair}, from which we also observe that no pseudo-dominating pair or pseudo-dominating triple can be found in a $P_5$-free graph.
\begin{figure}[h]
\begin{subfigure}{.5\textwidth}
\begin{center}
\begin{tikzpicture}[scale=1]

\draw [gray!20!white, fill=gray!20!white] (-2.5,0) rectangle (2,1);
\node [left] at (-2.5,0.5) {$L$};
\draw [gray!20!white, fill=gray!20!white] (-2.5,-2.5) rectangle (2,-1.5);
\node [left] at (-2.5,-2) {$J$};
\filldraw [black] (-0.5,0.5) circle [radius=2pt] 
			   (1,0.5) circle [radius=2pt]
			   (-0.5,-1.75) circle [radius=2pt]
		         (1,-1.75) circle [radius=2pt]
		         (-2,-2.25) circle [radius=2pt];
\draw [very thick] (1,0.5) -- (1,-1.75) (-0.5,0.5) -- (-0.5,-1.75) (-0.5,0.5) -- (-2,-2.25) -- (1,0.5);
\draw [dashed] (-0.5,0.5) -- (1,-1.75) -- (-0.5,-1.75) -- (1,0.5) -- (-0.5,0.5) (-0.5,-1.75) -- (-2,-2.25) -- (1,-1.75);
\node [above left] at (-0.5,0.5) {$w_1$};
\node [above right] at (1,0.5) {$w_2$};
\node [above left] at (-0.5,-1.75) {$x_1$};
\node [above right] at (1,-1.75) {$x_2$};
\node [above left] at (-2,-2.25) {$y$};
\end{tikzpicture}
\end{center}
\end{subfigure}
\begin{subfigure}{.5\textwidth}
\begin{center}

\begin{tikzpicture}
\draw [gray!20!white, fill=gray!20!white] (-2.5,0) rectangle (2,1);
\node [left] at (2.4,0.5) {$L$};

\draw [gray!20!white, fill=gray!20!white] (-2.5,-2.5) rectangle (2,-1.5);
\node [left] at (2.4,-2) {$J$};

\filldraw [black] (-2,0.75) circle [radius=2pt]
			    (-0.5,0.25) circle [radius=2pt]
		          (1,0.75) circle [radius=2pt]
		          (-2,-2) circle [radius=2pt]
			    (-0.5,-2) circle [radius=2pt];
\draw [very thick] (-2,-2) -- (-2,0.75) -- (-0.5,-2) -- (-0.5,0.25) -- (1,0.75);
\draw [dashed] (-2,0.75) -- (1,0.75) -- (-0.5,-2) -- (-2,-2) -- (-0.5,0.25) -- (-2,0.75);
\draw [dashed] (-2,-2) -- (1,0.75);
\node [above left] at (-2,0.65) {$w_1$};
\node [above] at (-0.5,0.25) {$w_2$};
\node [above right] at (1,0.65) {$w_3$};
\node [below left] at (-2,-2) {$x_1$};
\node [below right] at (-0.5,-2) {$x_2$};
\end{tikzpicture}
\end{center}
\end{subfigure}
\caption{Examples, on the left, of a pseudo-dominating pair $(w_1,w_2)$, and, on the right, of a pseudo-dominating triple $(w_1,w_2,w_3)$. As easily seen, the presence of either implies the existence of at least one induced $P_5$.}\label{f-pair}
\end{figure}
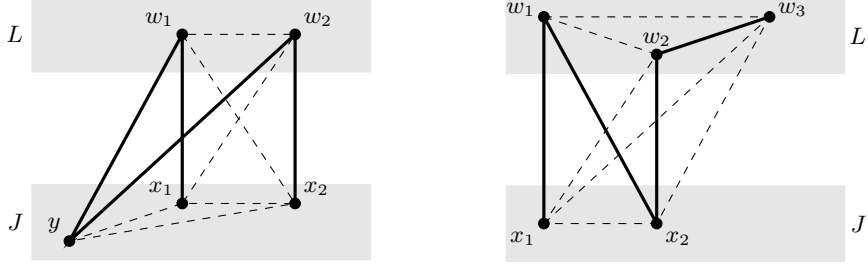

Let $S$ be a connected vertex cover of $G$ that contains $J$. Recall that $J$ is an independent set.  A subset $L^*\subseteq L\cap S$ is a {\it connector} of $S$ if $J\cup L^*$ is connected. We present the following two lemmas. 

\begin{lemma}\label{l-pseudopair}
Let $(G,J,y)$ be a  cover-complete triple, where $G$ is an $(sP_1+P_5)$-free graph for some $s\geq 0$. Let $S$ be a connected vertex cover of $G$ that contains $J$. If $S$ contains both vertices of a pseudo-dominating pair $w_1$, $w_2$, then $S$~has a connector of size at most $s+1$ that contains both $w_1$ and $w_2$.
\end{lemma}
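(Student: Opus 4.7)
The plan is to reformulate ``$L^{*}$ is a connector'' using the cover-complete structure, then take a minimum such $L^{*}$ containing $w_1$ and $w_2$, and argue that if it is too large the pseudo-dominating pair forces a forbidden induced $sP_1+P_5$. To unpack the connector condition: since $J$ is independent and, by property~(B), $y$ is adjacent to every vertex of $L\subseteq G-J$, a set $L^{*}\subseteq L\cap S$ is a connector if and only if every $x\in J\setminus\{y\}$ has a neighbour in $L^{*}$. Because $S$ is a connected vertex cover containing $J$, each $x\in J\setminus\{y\}$ must have such a neighbour in $S$ (and hence in $L\cap S$, since $J$ is independent), so $L\cap S$ is itself a connector. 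I therefore fix a minimum $L'\subseteq(L\cap S)\setminus\{w_1,w_2\}$ such that $L^{*}:=\{w_1,w_2\}\cup L'$ is a connector, and the task reduces to proving $|L'|\le s-1$.

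Next, from the pseudo-dominating pair $w_1,w_2$ with witnesses $z_1,z_2\in J$ I would extract the induced $P_5$ displayed on the left of Fig.~\ref{f-pair}, namely $z_1\,w_1\,y\,w_2\,z_2$. The four edges follow from the definitions of $z_1,z_2$ and property~(B) (which in particular forces $z_1,z_2\neq y$), and the six non-edges follow from $w_1w_2\notin E_G$, the defining properties of $z_1,z_2$, the independence of $J$, and $z_1\neq z_2$ (which holds because they differ in adjacency to $w_1$).

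Finally, minimality of $L'$ yields, for each $v\in L'$, a \emph{private} neighbour $x_v\in J\setminus\{y\}$ that is adjacent to $v$ but to no other vertex of $\{w_1,w_2\}\cup L'$; otherwise $L'\setminus\{v\}$ would still give a connector. The family $\{x_v:v\in L'\}$ consists of pairwise distinct vertices (two distinct members of $L'$ cannot share a private neighbour), pairwise non-adjacent vertices (they lie in $J$), and each $x_v$ is disjoint from and non-adjacent to every vertex of the $P_5$: non-adjacency to $w_1,w_2$ is the private property (which in particular forces $x_v\notin\{z_1,z_2\}$ because $z_i$ is adjacent to $w_i$), and non-adjacency to $y,z_1,z_2$ is the independence of $J$. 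The main obstacle is precisely this triple check of distinctness, disjointness from the $P_5$, and absence of edges; once it is completed, $|L'|\ge s$ would produce an induced $sP_1+P_5$ in $G$, contradicting the hypothesis. Hence $|L'|\le s-1$ and $|L^{*}|\le s+1$, as required.
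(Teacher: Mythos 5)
Your proof is correct and follows essentially the same route as the paper: both hinge on the induced $P_5$ on $\{z_1,w_1,y,w_2,z_2\}$ and on the observation that, by $(sP_1+P_5)$-freeness and the independence of $J$, at most $s-1$ vertices of $J$ can avoid $N_G(\{w_1,w_2\})$. The only cosmetic difference is direction: the paper builds the connector explicitly by adding one $S$-neighbour per undominated vertex of $J$, whereas you take a minimal connector and bound it via private neighbours, which yields the same count.
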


\begin{proof}
By definition, there exist two vertices $x_1$ and $x_2$ in $J$, such that $w_1$ is not adjacent to $x_2$ and $w_2$ is not adjacent to $x_1$. As $J$ is an independent set by (A) and each vertex of $L$ is adjacent to $y$ by (B), we find that  $\{x_1,w_1,y,w_2,x_2\}$ induces a $P_5$ in that order. As $G$ is $(sP_1+P_5)$-free and $J$ is an independent set, this means that $\{w_1,w_2\}$ dominates all vertices of $J$ except for a subset $I\subseteq J$ of at most $s-1$ vertices. We choose $L^*$ to consist of $w_1$, $w_2$ and a neighbour in $L\cap S$ of each vertex of $I$ (note that such a neighbour must exist for each vertex of $I$ as $S$ is connected). Then  $J\cup L^*$ is connected, that is, $L^*$ is a connector, as each vertex of $J$ is adjacent to some vertex of $L^*$ and each vertex of $L^*$ is adjacent to $y\in J$ due to (B). Moreover, $L^*$ has size at most $s+1$.\qed
\end{proof}

\begin{lemma}\label{l-pseudotriple}
Let $(G,J,y)$ be a  cover-complete triple, where $G$ is an $(sP_1+P_5)$-free graph for some $s\geq 0$. Let $S$ be a connected vertex cover of $G$ that contains~$J$. If $S$ contains all three vertices of a pseudo-dominating triple $w_1,w_2,w_3$, then $S$ has a connector of size at most $s+2$ that contains $\{w_1,w_2,w_3\}$.
\end{lemma}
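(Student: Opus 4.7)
My plan is to mimic the proof of Lemma~\ref{l-pseudopair}, but with a different induced $P_5$ that exploits the edge $w_2w_3$ and the second $J$-witness $x_2$ of the pseudo-dominating triple.

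First, I would exhibit an induced $P_5$ on the sequence $x_1,w_1,x_2,w_2,w_3$. The four consecutive pairs are edges by the triple's definition (since $x_2 \in N_G(w_1) \cap N_G(w_2)$ and $w_2w_3 \in E_G$), and the six non-consecutive pairs are non-edges: $x_1w_2,\,x_1w_3,\,w_1w_2,\,w_1w_3,\,x_2w_3$ directly from the triple's definition, and $x_1x_2$ from property~(A). Note that, critically, this $P_5$ avoids $y$; the obvious analogue $x_1\text{-}w_1\text{-}y\text{-}w_2\text{-}w_3$ of the path used in Lemma~\ref{l-pseudopair} would fail because (B) forces the edge $yw_3$.

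Second, let $I \subseteq J$ be the set of vertices with no neighbour in $\{w_1,w_2,w_3\}$. I would argue $|I| \le s-1$. By (B), $y \notin I$; by assumption $x_1,x_2 \notin I$ (both neighbour $w_1$); and for any $j \in I$, property~(A) gives that $j$ is non-adjacent to the three $J$-vertices $x_1,x_2,y$ of the $P_5$, while $j$ is non-adjacent to $w_1,w_2,w_3$ by the definition of $I$. Hence each $j \in I$ is isolated from the $P_5$, and $I$ itself is independent by~(A). If $|I| \ge s$, then any $s$ of its vertices together with the $P_5$ would form an induced $sP_1+P_5$, contradicting the hypothesis.

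Third, I would build the connector $L^* = \{w_1,w_2,w_3\} \cup \{u_j : j \in I\}$, where $u_j$ is any neighbour of $j$ in $L \cap S$; such a $u_j$ exists because $j \in J\setminus\{y\}$ so every neighbour of $j$ lies in $L$ by definition, and the connectedness of $S$ forces the vertex $j \in S$ to have a neighbour in $S$. Then $|L^*| \le 3+(s-1)=s+2$, and $J \cup L^*$ is connected because every vertex of $J \setminus I$ is adjacent to some $w_i$, every $j \in I$ to its $u_j$, and every vertex of $L^*$ is adjacent to $y \in J$ by~(B), so every vertex of $J \cup L^*$ is linked to $y$.

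The main obstacle is locating the right induced $P_5$: the straightforward analogue of the pair-case path is not induced, because $y$ is adjacent to all of $w_1,w_2,w_3$ by~(B). The argument therefore hinges on substituting $y$ by the second $J$-witness $x_2$ and using the edge $w_2w_3$ that the triple furnishes.
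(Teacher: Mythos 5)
Your proof is correct and follows essentially the same route as the paper's: the same induced $P_5$ on $x_1,w_1,x_2,w_2,w_3$, the same bound of $s-1$ on the set of $J$-vertices undominated by $\{w_1,w_2,w_3\}$, and the same construction of the connector from $w_1,w_2,w_3$ plus one $S$-neighbour per undominated vertex. You merely spell out the induced-ness of the $P_5$ and the $sP_1+P_5$ contradiction in more detail than the paper does.
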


\begin{proof}
By definition, there exist two vertices $x_1$ and $x_2$ in $J$ such that $x_1$ is adjacent to $w_1$ but not to $w_2$ and $w_3$, and $x_2$ is adjacent to $w_1$ and $w_2$ but not $w_3$. Then $\{x_1,w_1,x_2,w_2,w_3\}$ induce a $P_5$ in that order. As $G$ is $(sP_1+P_5)$-free and $J$ is an independent set, this means that $\{w_1,w_2,w_3\}$ dominates all vertices of $J$ except for a subset $I\subseteq J$ of at most $s-1$ vertices. We choose $L^*$ to consist of $w_1$, $w_2$, $w_3$ and a neighbour in $L\cap S$ of each vertex of $I$ (note that such a neighbour must exist for each vertex of $I$ as $S$ is connected). Then  $J\cup L^*$ is connected, that is, $L^*$ is a connector, as each vertex of $J$ is adjacent to some vertex of $L^*$ and each vertex of $L^*$ is adjacent to $y\in J$ due to (B). Moreover, $L^*$ has size at most $s+2$.\qed
\end{proof}

Let $(G,J,y)$ be a  cover-complete triple. Let $S$ be a connected vertex cover of $G$ that contains $J$. If $S$ contains both vertices of some pseudo-dominating pair of $G$ or all three vertices of some pseudo-dominating triple of $G$, then $S$ is of {\it type~1}. Otherwise $S$ must contain at most one vertex of any pseudo-dominating pair and at most two vertices of any pseudo-dominating triple of $G$. In that case we say that $S$ is of {\it type~2}. We observe that $G$ might have connected vertex covers of only one type.

We will now see, in Lemma~\ref{l-first}, how to find a smallest type~1 connected vertex cover of a graph $G$ of a  cover-complete triple $(G,J,y)$ in polynomial time (if it exists). After that we shall prove how to find a smallest type~2 connected vertex cover of $G$ in polynomial time (if it exists). To compute these sets we need the following lemma, which uses Theorem~\ref{t-vc} in its  proof. 

\begin{lemma}\label{l-vcvc}
Let $(G,\{y\},y)$ be a  cover-complete triple, where $G$ is an $(sP_1+P_5)$-free graph for some $s\geq 0$. Then it is possible to compute a smallest connected vertex cover of $G$ that contains $y$ in $O(n^{s+14})$ time.
\end{lemma}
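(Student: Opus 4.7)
The plan is to exploit heavily the special shape of $J=\{y\}$. First I would unpack the cover-complete properties in this degenerate case: (A) is trivial, (C) is vacuous since $J\setminus\{y\}=\emptyset$, and (B) says that $y$ is adjacent to every vertex of $G-y$. In other words, $y$ is a universal vertex of $G$.

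The key observation is then that universality of $y$ makes the connectivity requirement free of charge. Indeed, if $S\subseteq V_G$ is any vertex cover of $G$ with $y\in S$, then every vertex of $S\setminus\{y\}$ is a neighbour of $y$, so $G[S]$ contains a spanning star centred at $y$ and hence is connected. Consequently, a smallest connected vertex cover of $G$ containing $y$ is precisely $\{y\}\cup S'$, where $S'$ is a smallest vertex cover of $G-y$: the forward direction holds because $S\setminus\{y\}$ must cover every edge of $G-y$, and the backward direction holds because adding $y$ to a vertex cover of $G-y$ covers all remaining edges (which are all incident with~$y$).

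It therefore suffices to compute a smallest vertex cover of $G-y$ in polynomial time. Since $G-y$ is an induced subgraph of $G$, it inherits $(sP_1+P_5)$-freeness, and since $sP_1+P_5$ is an induced subgraph of $sP_1+P_6$, the graph $G-y$ is also $(sP_1+P_6)$-free. Hence Theorem~\ref{t-vc} applies to $G-y$ and produces a smallest vertex cover in polynomial time; adjoining $y$ yields the desired smallest connected vertex cover of $G$ containing $y$.

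There is no genuine obstacle here, just bookkeeping: the exponent $s+14$ in the statement simply reproduces the running-time bound of the algorithm underlying Theorem~\ref{t-vc} when applied to the $(n-1)$-vertex graph $G-y$, plus lower-order work to add $y$ and output the answer. The slightly delicate point worth articulating in the write-up is the two-line argument that connectivity comes for free from property~(B), which is what lets us invoke the \emph{unconstrained} \textsc{Vertex Cover} algorithm of Theorem~\ref{t-vc} on $G-y$ rather than having to reason about connectivity directly.
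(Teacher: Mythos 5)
Your argument is essentially the paper's: property (B) makes $y$ universal, so connectivity comes for free, and the problem reduces to computing a smallest vertex cover of the $(sP_1+P_5)$-free graph $G-y$ and adjoining $y$. The one point where your write-up does not match is the running-time accounting. The bound $O(n^{s+14})$ does \emph{not} come from invoking Theorem~\ref{t-vc} on the $(sP_1+P_6)$-free graph $G-y$: the $P_6$-free independent-set algorithm of Grzesik et al.\ underlying that theorem has a far larger polynomial degree, so citing it as a black box only yields ``polynomial time'', not the stated exponent. The paper instead derives $O(n^{s+14})$ directly: it first checks by brute force whether a maximum independent set of $G-y$ has size at most $s$; otherwise it enumerates all $O(n^s)$ independent $s$-subsets, deletes each such subset together with its neighbourhood (leaving a $P_5$-free graph), and runs the $O(n^{14})$ algorithm of Lokshtanov, Vatshelle and Villanger on the remainder. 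If you want the precise bound claimed in the lemma, you should replace your appeal to Theorem~\ref{t-vc} with this explicit reduction to the $P_5$-free case.
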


\begin{proof}
As $(G,\{y\},y)$ is a  cover-complete triple, $y$ dominates $G$. Moreover $G-y$ is $(sP_1+P_5)$-free. Then we can compute a smallest  vertex cover~$S$ of $G-y$ by using Theorem~\ref{t-vc}.  As $y$ dominates $G$, we find that $S\cup \{y\}$ is a smallest connected vertex cover of~$G$ that contains $y$. This takes $O(n^{s+14})$ time, which can be seen by considering the complementary problem of computing a maximum independent set in $G-y$. 
We first check by brute force, in time $O(n^s)$, if the size of a maximum independent set in $G-y$ has size at most $s$. If so, then we are done. Otherwise we consider each possibility of choosing a set~$S$ of $s$ independent vertices of $G-y$ to be in the independent set of $G-y$. For each choice $S$, we remove all vertices of $S$ and their neighbours from $G-y$. The remaining graph is $P_5$-free and we then use the algorithm of~\cite{LVV14}, which runs in $O(n^{14})$ time, to find a maximum independent set in it. \qed
\end{proof} 

Using Lemmas~\ref{l-pseudopair}--\ref{l-vcvc}, we are now ready to deal with type~1 smallest connected vertex covers.

\begin{lemma}\label{l-first}
Let $(G,J,y)$ be a  cover-complete triple. It is possible to find in $O(n^{2s+16})$ time a smallest type~1 connected vertex cover of~$G$.
\end{lemma}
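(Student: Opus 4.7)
The plan is to reduce the search for a smallest type~1 connected vertex cover of $G$ to polynomially many applications of Lemma~\ref{l-vcvc}, by exhaustively guessing the small connector whose existence is guaranteed by Lemmas~\ref{l-pseudopair} and~\ref{l-pseudotriple}.

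First, I would enumerate every pseudo-dominating pair and every pseudo-dominating triple of $G$ in $O(n^3)$ time. For each pair I would consider every way of extending it with at most $s-1$ further vertices of $L$ to form a set $L^*$ of size at most $s+1$, and likewise at most $s-1$ extensions of each triple to form an $L^*$ of size at most $s+2$. This yields $O(n^{s+2})$ candidate sets $L^*\subseteq L$, and for each I would discard it unless $G[J\cup L^*]$ is connected. By Lemmas~\ref{l-pseudopair} and~\ref{l-pseudotriple}, every smallest type~1 connected vertex cover $S$ of $G$ contains some such $L^*$ as a connector, so at least one candidate will be a connector of $S$.

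For each surviving candidate $L^*$, I would compute a smallest connected vertex cover of $G$ that contains $J\cup L^*$. To do so, I contract the edges of a spanning tree of the connected subgraph $G[J\cup L^*]$ one at a time, obtaining a graph $G^\star$ in which $J\cup L^*$ has been merged into a single vertex $y^\star$. Iterating Lemma~\ref{l-contract} shows that $G^\star$ is connected and $(sP_1+P_5)$-free. The triple $(G^\star,\{y^\star\},y^\star)$ is cover-complete: (A) and (C) are trivial, while (B) holds because $y\in J$ is adjacent to every vertex of $V(G)\setminus J$ by property~(B) of $(G,J,y)$, so $y^\star$ is adjacent to every vertex of $V(G^\star)\setminus\{y^\star\}$. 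A direct check, analogous to part~3 of Lemma~\ref{l-crucial}, shows that for any $T\subseteq V(G)\setminus(J\cup L^*)$, the set $(J\cup L^*)\cup T$ is a connected vertex cover of $G$ if and only if $\{y^\star\}\cup T$ is a connected vertex cover of $G^\star$, and their sizes differ by the constant $|J\cup L^*|-1$. Applying Lemma~\ref{l-vcvc} to $(G^\star,\{y^\star\},y^\star)$ therefore produces, in $O(n^{s+14})$ time, a smallest connected vertex cover of $G$ containing $J\cup L^*$.

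Finally, I would output the smallest cover found over all candidates. Any smallest type~1 CVC $S$ is matched by the iteration that guesses its guaranteed connector, and every cover returned by the procedure contains a pseudo-dominating pair or triple and is therefore itself of type~1. The total running time is $O(n^{s+2})\cdot O(n^{s+14})=O(n^{2s+16})$, matching the claim. The main obstacle is the bookkeeping around the contraction step: one has to verify carefully that $(G^\star,\{y^\star\},y^\star)$ is cover-complete, that iterated edge contraction preserves $(sP_1+P_5)$-freeness, and that the correspondence between connected vertex covers of $G$ containing $J\cup L^*$ and those of $G^\star$ containing $y^\star$ is indeed size-preserving up to the additive constant $|J\cup L^*|-1$.
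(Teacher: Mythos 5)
Your proposal is correct and follows essentially the same approach as the paper: enumerate the pseudo-dominating pairs and triples, extend each by at most $s-1$ further vertices to guess the bounded-size connector guaranteed by Lemmas~\ref{l-pseudopair} and~\ref{l-pseudotriple}, contract $J\cup L^*$ to a single vertex, and solve the resulting instance with Lemma~\ref{l-vcvc}, keeping the smallest cover found. The only cosmetic difference is that the paper realises the contraction by iterating the set-contraction of Lemma~\ref{l-crucial} over the vertices of $L^*$, whereas you contract a spanning tree of $G[J\cup L^*]$ and verify cover-completeness of the resulting singleton triple directly.
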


\begin{proof}
We can compute all pseudo-dominating pairs of $G$ by examining each pair of vertices in turn.  This takes $O(n)$ time per pair. As
the number of pseudo-dominating pairs is $O(n^2)$, this takes $O(n^3)$ time in total.

For each pseudo-dominating pair $(w_1,w_2)$ of $G$, we describe how to compute a smallest connected vertex cover $S_{w_1,w_2}$ of $G$ that contains $J\cup \{w_1,w_2\}$. 
By Lemma~\ref{l-pseudopair}, such a vertex cover must have a connector~$L^*$ of size at most $s+1$ that contains $w_1$ and $w_2$. We find each such connector~$L^*$ by considering all sets of up to $s-1$ vertices and asking whether, combined with $w_1$ and $w_2$, they form such a connector.

For each such set~$L^*$, we do as follows. We first check if $J\cup L^*$ is connected. If so, then we apply Lemma~\ref{l-crucial} recursively for each $w\in L^*$. This takes $O(n^2)$ time, as we can use Breadth First Search and set contract at the same time.
Let $(G',J',y')$ be the resulting  cover-complete triple. Then $J'=\{y'\}$, which means  we can apply Lemma~\ref{l-vcvc} to find a smallest connected vertex cover $S'$ of $G'$ in $O(n^{14+s})$ time. 
By Lemma~\ref{l-crucial}, we can translate $S'$ into the desired vertex cover $S_{w_1,w_2}$ by uncontracting any contracted edges. As, for each pseudo-dominating pair, the number of sets $L^*$ that contain them is $O(n^{s-1})$, and the number of pseudo-dominating pairs is $O(n^2)$, the time needed to find these vertex covers is $O(n^{2s+15})$.

For each pseudo-dominating triple $(w_1,w_2,w_3)$ of $G$ we compute a smallest connected vertex cover~$S_{w_1,w_2,w_3}$ of $G$ that contains $J\cup \{w_1,w_2,w_3\}$. We can do this in  
$O(n^{2s+16})$  time by exactly the same arguments: the only differences are that the number of pseudo-dominating triples is $O(n^3)$ and that we need to apply Lemma~\ref{l-pseudotriple} instead of Lemma~\ref{l-pseudopair}.

From all the computed sets $S_{w_1,w_2}$ and $S_{w_1,w_2,w_3}$ we keep track (in constant time) of a smallest one, and in the end this yields a smallest type~1 connected vertex cover of~$G$. This proves Lemma~\ref{l-first}.\qed
\end{proof}

Let $(G,J,y)$ be a  cover-complete triple.  Using Lemma~\ref{l-first} we can find a smallest type~1 connected vertex cover of $G$ in polynomial time.  However, it might be possible that $G$ has a smaller connected vertex cover of type~2.  To investigate this, we introduce two reduction rules that will transform a  cover-complete triple $(G,J,y)$ into a triple $(G',J',y')$ with $|J'|<|J|$. We say that such a rule is {\it safe} if the following three conditions hold: 

\begin{itemize}
\item [1.] If $G$ is $(sP_1+P_5)$-free and connected, then $G'$ is $(sP_1+P_5)$-free and connected.
\item [2.] $(G',J',y')$ is  cover-complete.
\item [3.] Given a smallest connected vertex cover $S'$ of $G'$ that contains $J'$, it is possible, in $O(n^{2s+16})$ time, to find a smallest connected vertex cover~$S$ of $G$ that contains~$J$.
\end{itemize}

\noindent
{\bf Rule 1.} Set-contract via $x$ whenever $x$ is a vertex in $L\cap N_{G}(w_1)\cap N_{G}(w_2)$ for some pseudo-dominating pair $(w_1,w_2)$.

\medskip

\noindent
{\bf Rule 2.} For any vertex $w_5\in L$ that is not adjacent to any vertex of a clique of four vertices $w_1,w_2,w_3,w_4$ in $L$, delete $w_5$ and set-contract via $u$ for every $u\in L\cap N_G(w_5)$.

\begin{lemma}\label{l-rules}
Rules~1 and 2 are safe.
\end{lemma}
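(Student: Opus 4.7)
The plan is to check the three safety conditions for each rule. Conditions~1 and~2 follow for both rules by iterated application of Lemmas~\ref{l-contract} and~\ref{l-crucial}: each individual edge contraction performed inside a set-contraction preserves connectedness and $(sP_1+P_5)$-freeness, and Lemma~\ref{l-crucial}(2) propagates cover-completeness through each intermediate triple. For Rule~2 the additional deletion of $w_5$ preserves $(sP_1+P_5)$-freeness trivially; its effect on connectedness will be handled alongside the forcing step. For condition~3 the strategy in each case is the same: isolate a forcing property that every type-$2$ connected vertex cover of $G$ containing $J$ must satisfy, translate the input $S'$ back through Lemma~\ref{l-crucial}(3), and return the smaller of the resulting candidate and the smallest type-$1$ cover supplied by Lemma~\ref{l-first} (which runs within the stated $O(n^{2s+16})$ budget).

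For Rule~1, the forcing property is that $x\in S$. Otherwise the edges $xw_1$ and $xw_2$ would force both endpoints of the pseudo-dominating pair $(w_1,w_2)$ into $S$, making $S$ of type~$1$. Hence $J\cup\{x\}\subseteq S$, and Lemma~\ref{l-crucial}(3) converts $S'$ into a smallest connected vertex cover of $G$ containing $J\cup\{x\}$, which together with the type-$1$ candidate of Lemma~\ref{l-first} yields a smallest connected vertex cover of $G$ containing $J$.

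For Rule~2, the forcing property is $w_5\notin S$, which then forces $L\cap N_G(w_5)\subseteq S$ in order to cover the edges incident to $w_5$. Set $X_i=N_G(w_i)\cap(J\setminus\{y\})$ for $i\in\{1,\dots,5\}$; Lemma~\ref{l-ind} ensures that $X_1,X_2,X_3,X_4$ are pairwise disjoint, and each $X_i$ is non-empty because $w_i\in L$. Suppose for contradiction that $w_5\in S$. Since $\{w_1,\dots,w_4\}$ is a clique all of whose edges are covered by $S$, the index set $I=\{i\in\{1,\dots,4\}:w_i\in S\}$ has size at least~$3$. For each $i\in I$, the pair $(w_5,w_i)$ is not pseudo-dominating (otherwise $S$ would be type~$1$), so either $X_5\subseteq X_i$ or $X_i\subseteq X_5$; the first possibility, combined with $X_5\neq\emptyset$ and the disjointness of the $X_j$'s, rules itself out as soon as $|I|\geq 2$, so in fact $X_i\subseteq X_5$ for every $i\in I$. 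Picking any distinct $i,j\in I$ and a third index $k\in I\setminus\{i,j\}$ (which exists because $|I|\geq 3$), the triple $(w_5,w_i,w_j)$ turns out to be pseudo-dominating: any vertex of $X_i\subseteq X_5$ plays the role of $x_2$, and any vertex of $X_k\subseteq X_5\setminus(X_i\cup X_j)$ plays the role of $x_1$. This makes $S$ type~$1$, a contradiction, and hence $w_5\notin S$. The rule encodes exactly these forced inclusions and exclusions; if the reduced graph happens to be disconnected then no type-$2$ cover exists and we simply output the type-$1$ cover from Lemma~\ref{l-first}, while otherwise Lemma~\ref{l-crucial}(3) translates $S'$ back.

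The main obstacle is the Rule~$2$ forcing argument, and in particular its dependence on the clique having size exactly~$4$: the spare index $k\in I\setminus\{i,j\}$ is precisely what supplies the witness $X_k\subseteq X_5\setminus(X_i\cup X_j)$ needed to realise $x_1$ in the pseudo-dominating triple, and this spare index is guaranteed only because $|I|\geq 3$, which in turn requires the original clique to have at least four vertices.
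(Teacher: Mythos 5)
Your proposal is correct and follows essentially the same strategy as the paper: for Rule~1 the forcing argument that $x$ must lie in every type-2 cover and the comparison with the type-1 candidate from Lemma~\ref{l-first} are identical, and for Rule~2 you reach the same key facts (at least three clique vertices in $S$, pairwise disjointness of the $X_i$ via Lemma~\ref{l-ind}, the forced containments $X_i\subseteq X_5$, and the resulting pseudo-dominating triple contradiction), merely organising the containment step slightly more cleanly through the index set $I$. Your explicit handling of possible disconnection after deleting $w_5$ is a small point the paper leaves implicit, but it does not change the argument.
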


\begin{proof}
We first consider Rule 1.

Let $(G',J',y')$ be the resulting triple after an application of Rule~1, where $J'=(J\setminus J_x)\cup \{y_x\}$ and $y'=y_x$. By Lemma~\ref{l-crucial}, $(G',J',y')$ is  a  cover-complete triple. By the same lemma, $G'$ is $(sP_1+P_5)$-free and connected if $G$ is $(sP_1+P_5)$-free and connected. Hence we have proven that conditions~1 and~2 hold.

We are left to prove condition~3.  Let $S'$ be a smallest connected vertex cover in $G'$ that contains $J'$.  Then $S=(S'\setminus \{y'\})\cup J_x\cup \{x\}$ is a smallest connected vertex cover of $G$ that contains $J\cup \{x\}$ due to Lemma~\ref{l-crucial}. We prove the following claim.

\medskip
\noindent
{\it Claim~1. For any type~2 connected vertex cover $T$ of $G$, it holds that $|T|\geq |S|$.}

\medskip
\noindent
We prove Claim~1 as follows. Let $T$ be a connected vertex cover $T$ of $G$ that is of type~2. Suppose $x\notin T$. Then, as $x$ is adjacent to both $w_1$ and $w_2$, we find that $T$ contains both $w_1$ and $w_2$. Thus $T$ is not of type~2, a contradiction. Hence $T$ contains $x$. This implies that the set $T'=(T\setminus (J\cup \{x\}))\cup J'$ is a connected vertex cover of $G'$ that contains $J'$. As $S'$ is a smallest connected vertex cover of $G'$ that contains $J'$, we find that $|T'|\geq |S'|$. Hence $|T| = |T'|+|J_x| \geq |S'|+|J_x|=|S|$. This proves Claim~1.

\medskip
\noindent
The above means that we can do as follows. 
Given $S'$ we compute $S=(S'\setminus \{y'\})\cup J_x\cup \{x\}$ in constant time.  By Lemma~\ref{l-first} we can also compute, in $O(n^{2s+16})$ time, a smallest type~1 connected vertex cover~$S^*$ of $G$ (note that $S=S^*$ is possible). If $S$ is of type~2, then $S$ is a smallest type~2 connected vertex cover of $G$, due to Claim~1. We compare $|S|$ and $|S^*|$ and choose the smallest one. If $S$ is of type~1, then $S^*$ is a smallest connected vertex cover of $G$, again due to Claim~1.  This proves condition~3 and completes the proof that Rule 1 is safe.

\medskip
\noindent
We now consider Rule 2.
We first show that $w_5$ cannot be in any connected vertex cover~$S$ of $G$ that is of type~2. For contradiction, suppose that $w_5$ is in such a connected vertex cover $S$. Because $S$ is a vertex cover and $\{w_1,w_2,w_3,w_4\}$ is a clique, $S$ contains at least three of $\{w_1,w_2,w_3,w_4\}$, say $w_1$, $w_2$, $w_3$.

For $i=1,\ldots,5$, let $X_i$ be the set of neighbours of $w_i$ in $J$. As $w_i\in L$, every $X_i\neq \emptyset$ by definition of $L$. By Lemma~\ref{l-ind}, we find that $X_1\cap X_2\cap X_3=\emptyset$. Let $x\in X_1$. If $x\notin X_5$, then $X_5 \subseteq X_1$, as otherwise $(w_1,w_5)$ is a pseudo-dominating pair of vertices that are both contained in~$S$, which is not possible as $S$ is of type~2. As $X_1\cap X_2=\emptyset$, we find that $X_5\cap X_2=\emptyset$. This means that $(w_2,w_5)$ is a pseudo-dominating pair of vertices that are both contained in~$S$, which is not possible either. Hence $x\in X_5$. We conclude that $X_1\subseteq X_5$. For the same reason, we find that $X_2\subseteq X_5$ and $X_3\subseteq X_5$. 

Recall that $X_1\cap X_2\cap X_3=\emptyset$. Hence we can pick a vertex $x_1\in X_1$ and a vertex $x_3\in X_3$, which are both adjacent to $w_5$ but not to $w_2$, and so find that $(w_5,w_1,w_2)$ is a pseudo-dominating triple. As all three vertices $w_1$, $w_2$, $w_5$ belong to $S$, while $S$ is of type~2, this is not possible. Hence $S$ does not contain~$w_5$.

As no connected vertex cover of $G$ of type~2 may contain~$w_5$, any connected vertex cover of $G$ that is of type~2 must contain all neighbours of $w_5$, and we can delete $w_5$. The proof of conditions~1--3 is identical to the proof for Rule 1 where the neighbours of $w_5$ in $L$ take the role of the vertex~$x$ in the proof for Rule 1. \qed
\end{proof}

We call a cover-complete triple $(G,J,y)$ {\it free} if $G$ has no pseudo-dominating pair with a common neighbour in $L$, and moreover, $G[L]$ is $(P_1+K_4)$-free. By exhaustively applying Rules~1 and~2 in arbitrary order, which we may safely do due to Lemma~\ref{l-rules}, we have the following lemma.

\begin{lemma}\label{l-eventual}
A  cover-complete triple $(G,J,y)$ can be modified, in $O(n^6)$ time, into a free cover-complete triple $(G',J',y')$ 
with the following \mbox{properties}:
\begin{itemize}
\item [1.] If $G$ is $(sP_1+P_5)$-free and connected, then $G'$ is $(sP_1+P_5)$-free and connected.\\[-10pt].
\item [2.] Given a smallest connected vertex cover $S'$ of $G'$ that contains $J'$, it is possible to find in $O(n^{2s+17})$ time a smallest connected vertex cover~$S$ of $G$ that contains $J$.
\end{itemize}
\end{lemma}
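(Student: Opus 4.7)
The plan is to apply Rules~1 and~2 exhaustively, in arbitrary order, starting from $(G_0,J_0,y_0)=(G,J,y)$ and producing a sequence $(G_0,J_0,y_0),(G_1,J_1,y_1),\ldots,(G_k,J_k,y_k)=(G',J',y')$ where each step applies whichever rule is currently applicable, halting once neither is. At termination, the failure of Rule~1 gives that $G'$ has no pseudo-dominating pair with a common neighbour in $L_{J'}$, and the failure of Rule~2 gives that $G'[L_{J'}]$ is $(P_1+K_4)$-free. So $(G',J',y')$ is \emph{free} by definition.

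I then bound the number of iterations by $n$ and the cost per iteration by $O(n^5)$. For termination, every application of Rule~1 set-contracts via some $x\in L$; because $x$ has $y$ as a neighbour by~(B) and some neighbour in $J\setminus\{y\}$ by the definition of $L$, the contracted set $J_x\cup\{x\}$ has at least three vertices, so $|V|$ drops by at least two. Every application of Rule~2 deletes $w_5$, so $|V|$ drops by at least one. Hence at most $n$ iterations occur. For detection, enumerating pseudo-dominating pairs together with their common neighbours in $L$ is $O(n^3)$, while enumerating $K_4$'s in $G[L]$ together with a non-neighbour is $O(n^5)$; each iteration thus costs $O(n^5)$, giving the claimed $O(n^6)$ total.

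Properties~1 and~2 then follow by chaining Lemma~\ref{l-rules}. Since each rule application preserves $(sP_1+P_5)$-freeness, connectedness, and cover-completeness, these properties propagate along the whole chain, yielding property~1. For property~2, Lemma~\ref{l-rules} converts a smallest connected vertex cover $S_{i+1}$ of $G_{i+1}$ containing $J_{i+1}$ into a smallest connected vertex cover $S_i$ of $G_i$ containing $J_i$ in $O(n^{2s+16})$ time. Starting from the given $S'=S_k$ and walking this chain backwards through $k=O(n)$ steps yields the required $S$ in total time $O(n^{2s+17})$.

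I expect no serious obstacle: Lemma~\ref{l-rules} carries all the genuine content, and the proof of Lemma~\ref{l-eventual} is essentially a composition argument plus book-keeping on the number of iterations and per-iteration cost. The one point that requires light care is confirming that the negations of the two rule triggers match exactly the two conditions in the definition of \emph{free}, but this is immediate from unpacking the definitions.
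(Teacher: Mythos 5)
Your proposal is correct and follows essentially the same route as the paper: exhaustive application of Rules~1 and~2, termination because each application shrinks the graph, the observation that the negated rule triggers are exactly the two conditions defining a free triple, and chaining condition~3 of Lemma~\ref{l-rules} over the $O(n)$ applications to get the $O(n^{2s+17})$ bound. Your justification that each rule application strictly reduces $|V|$ is slightly more explicit than the paper's, but the argument is the same.
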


\begin{proof}
We exhaustively apply Rules~1 and~2 in arbitrary order. Checking if Rule~1 can be applied takes $O(n^3)$ time, as there are $O(n^2)$ pairs of vertices and for each pair it takes $O(n)$ time to check if it is pseudo-dominating. Similarly, checking if Rule~2 can be applied takes $O(n^5)$ time.
As each application of each of these rules takes $O(n)$ time, and reduces the size of $G$, this procedure will complete in $O(n^6)$ time. By repeated use of Lemma~\ref{l-rules}, this results in a cover-complete triple  $(G',J',y')$ 
that satisfies the two properties of the lemma; in particular given a a smallest connected vertex cover~$S'$ of~$G'$ that contains~$J'$, it is possible to find in $O(n^{2s+17})$ time a smallest connected vertex cover~$S$ of~$G$ that contains~$J$, as we applied Rules~1 and~2 at most $n$ times and by condition~3 we need $O(n^{2s+16})$ time per application.
Moreover,~$G'$ contains no pseudo-dominating pair with a common neighbour in~$L'=L_{J'}$ and $G'[L']$ is $(P_1+K_4)$-free, as otherwise we could still apply Rule~1 or Rule~2, respectively.
Hence $(G',J',y')$ is a free cover-complete triple. \qed
\end{proof}

Let $(G,J,y)$ be a free  cover-complete triple. A connector of a connected vertex cover~$S$ of $G$ is {\it minimal} if it does not properly contain a smaller connector of~$S$. The next three lemmas are on free cover-complete triples; the second makes use of the first.

\begin{lemma}\label{l-claim1}
Let $(G,J,y)$ be a free  cover-complete triple.  Then every minimal connector~$L^*$ of every type~2 connected vertex cover~$S$ of $G$ is a clique.
\end{lemma}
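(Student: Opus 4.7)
The plan is to argue by contradiction: if $L^*$ is not a clique, two non-adjacent vertices of $L^*$ will be shown to form a pseudo-dominating pair both of whose vertices lie in $S$, violating that $S$ is of type~2.

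I will start by analysing the structure of $G[J\cup L^*]$. By~(A), $J$ is independent, and by~(B), the vertex $y$ is adjacent to every vertex of $L$ and in particular to every vertex of $L^*$. Hence in $G[J\cup L^*]$ all vertices of $L^*$ already lie in the connected component of $y$, and a vertex $x\in J\setminus\{y\}$ lies in that component if and only if $x$ has a neighbour in $L^*$. Consequently, $L^*$ is a connector precisely when every vertex of $J\setminus\{y\}$ has at least one neighbour in $L^*$.

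Next, I would extract the crucial consequence of minimality: for each $w\in L^*$, there exists a \emph{private} neighbour $x_w\in J\setminus\{y\}$ such that $x_w$ is adjacent to $w$ and to no other vertex of $L^*$. Otherwise every vertex of $J\setminus\{y\}$ would still have a neighbour in $L^*\setminus\{w\}$, making $L^*\setminus\{w\}$ a strictly smaller connector of $S$ and contradicting minimality. The private neighbour cannot be $y$ itself, because $y$ is adjacent to every vertex of $L^*\setminus\{w\}$.

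Now suppose for contradiction that $L^*$ is not a clique and pick non-adjacent $w_1,w_2\in L^*$. Let $x_1,x_2\in J\setminus\{y\}$ be the private neighbours guaranteed by the previous step. Then $x_1$ is adjacent to $w_1$ but not to $w_2$, and $x_2$ is adjacent to $w_2$ but not to $w_1$; in particular $x_1\neq x_2$. This matches exactly the definition of a pseudo-dominating pair. Since $w_1,w_2\in L^*\subseteq S$ and $S$ is of type~2, $S$ cannot contain both members of a pseudo-dominating pair, a contradiction.

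The only mildly delicate point is recognising that minimality of the connector produces precisely the asymmetric private-neighbour structure demanded by the pseudo-dominating pair definition; once that is in place, the conclusion is immediate. Notice that this argument does not actually invoke the freeness hypothesis on $(G,J,y)$; freeness will presumably be needed in the subsequent lemmas that further restrict the structure of $L^*$ (for instance to bound its size through the $(P_1+K_4)$-freeness of $G[L]$).
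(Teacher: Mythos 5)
Your proof is correct and follows essentially the same route as the paper: assume $L^*$ is not a clique, use minimality of the connector to produce, for each of two non-adjacent vertices $w_1,w_2\in L^*$, a neighbour in $J$ not adjacent to the other, and conclude that $(w_1,w_2)$ is a pseudo-dominating pair contained in $S$, contradicting type~2. Your extra observations (the reformulation of connectors via (B), the explicit private-neighbour argument, and the remark that freeness is not actually needed here) are all accurate elaborations of steps the paper leaves implicit.
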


\begin{proof}
For contradiction, suppose that $L^*$ is not a clique. Then $L^*$ contains two non-adjacent vertices $w_1$ and $w_2$. As $L^*$ is a minimal connector, $w_1$ has a neighbour in $J$ not adjacent to $w_2$, and vice versa. However, then $(w_1,w_2)$ is a pseudo-dominating pair of $G$. This is not possible, as $S$ is of type~2.\qed
\end{proof}

\begin{lemma}\label{l-claim2}
Let $(G,J,y)$ be a free  cover-complete triple that has a pseudo-dominating pair $(w_1,w_2)$. Then every minimal connector~$L^*$ of every type~2 connected vertex cover~$S$ of $G$ has size at most~$5$.
\end{lemma}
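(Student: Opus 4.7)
The plan is to suppose $|L^*|\geq 6$ for contradiction and exploit the interplay between $(P_1+K_4)$-freeness of~$G[L]$, freeness of the triple, and the type~2 condition on~$S$.

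First, by Lemma~\ref{l-claim1}, $L^*$ is a clique, and since $w_1,w_2$ are non-adjacent they cannot both lie in~$L^*$. If exactly one, say~$w_1$, lies in~$L^*$, then freeness forces $w_2$ to have no neighbour in $L^*\setminus\{w_1\}$, and since $|L^*\setminus\{w_1\}|\geq 5$, any four of those vertices together with~$w_2$ induce $P_1+K_4$ in~$G[L]$---contradicting freeness. So neither $w_1$ nor~$w_2$ lies in~$L^*$. Setting $A=N_G(w_1)\cap L^*$ and $B=N_G(w_2)\cap L^*$, freeness gives $A\cap B=\emptyset$; since $L^*$ is a clique and $G[L]$ is $(P_1+K_4)$-free, each~$w_i$ has at most three non-neighbours in~$L^*$, giving $|A|,|B|\geq|L^*|-3$. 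Then $|L^*|\geq|A|+|B|\geq 2|L^*|-6$, so $|L^*|=6$, $|A|=|B|=3$, and $L^*=A\cup B$.

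Next I would use minimality of~$L^*$ together with property~(C) to pin down~$J$. As $N_G(x)\cap L$ is independent for each $x\in J\setminus\{y\}$ and $L^*$ is a clique, every such~$x$ has exactly one neighbour in~$L^*$ (at most one by independence, at least one by minimality). Hence $J\setminus\{y\}$ partitions into non-empty classes $\{P_v\}_{v\in L^*}$, where $x\in P_v$ iff $N_G(x)\cap L^*=\{v\}$. A further application of~(C) yields: if $v\in A$ then every $x\in P_v$ is non-adjacent to~$w_1$ (otherwise $v,w_1\in N_G(x)\cap L$ with $v\sim w_1$, contradicting independence), and symmetrically for~$v\in B$ and~$w_2$. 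Let $b^*\in B$ be the unique $L^*$-neighbour of the pseudo-dominating-pair witness~$x_1$; it lies in~$B$ because $x_1\sim w_1$ and the argument above rules out any $L^*$-neighbour in~$A$.

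The contradiction now comes from exhibiting a pseudo-dominating pair or triple wholly inside~$S$. Since $\{w_1,w_2\}\not\subseteq S$ by type~2, assume without loss of generality that $w_1\in S$. For each of the two vertices $b\in B\setminus\{b^*\}$, one of two things happens. Either some $z\in P_b$ satisfies $z\not\sim w_1$, so that $(w_1,b)$ is a pseudo-dominating pair (with witnesses~$x_1$ for~$w_1$ and~$z$ for~$b$) and both $w_1,b\in S$, contradicting type~2. Or else every element of~$P_b$ is adjacent to~$w_1$; then, taking the two distinct $b,b'\in B\setminus\{b^*\}$, the triple $(w_1,b,b')$ is pseudo-dominating: the adjacency/non-adjacency requirements follow from $b,b'\in B$ and the clique structure on~$L^*$; $x_1$ plays the role of the witness neighbour of~$w_1$ avoiding $\{b,b'\}$, since its only $L^*$-neighbour is~$b^*$; and any element of~$P_b$ is a common $J$-neighbour of~$w_1$ and~$b$ that is not adjacent to~$b'$. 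All three vertices lie in~$S$, again contradicting type~2.

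The main obstacle I anticipate is the remaining sub-case in which neither $w_1$ nor~$w_2$ lies in~$S$: the violations above use~$w_1$ (or by symmetry~$w_2$) as a member of~$S$ directly. In this case the vertex-cover condition gives $N_G(w_1)\cup N_G(w_2)\subseteq S$, and I expect one completes the argument either by rerunning the dichotomy above with a vertex of $(N_G(w_1)\cup N_G(w_2))\cap L\setminus L^*$ in place of~$w_1$, or by exploiting additional structure from~(C) to locate a pseudo-dominating triple whose three vertices already lie in $S\setminus\{w_1,w_2\}$.
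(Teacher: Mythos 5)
There is a genuine gap: your argument ends with an unresolved case. Everything after your second paragraph is aimed at producing a pseudo-dominating pair or triple \emph{inside $S$}, and this requires $w_1\in S$ (or, by symmetry, $w_2\in S$). But type~2 only forbids $\{w_1,w_2\}\subseteq S$; it is entirely possible that neither vertex of the pseudo-dominating pair lies in $S$, since each of $w_1,w_2$ lies in $L=N_G(J\setminus\{y\})$ and the edges to $J\subseteq S$ are already covered from the $J$-side. You acknowledge this case yourself but only sketch two speculative ways to ``expect'' to close it; neither is carried out, and the first (substituting a vertex of $N_G(w_1)\cap L\setminus L^*$ for $w_1$) would not obviously inherit the non-adjacency and witness properties your dichotomy relies on. As written, the proof is incomplete.

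The larger issue is that the contradiction was already in your hands at the end of the second paragraph, and the remaining page of argument is an unnecessary detour. Once you know $A=N_G(w_1)\cap L^*$ has size at least $|L^*|-3\geq 3$ and that $w_2$ has no neighbour in $A$ (which is exactly your $A\cap B=\emptyset$, a consequence of freeness: a vertex of $A\cap B$ would be a common neighbour in $L$ of the pseudo-dominating pair $(w_1,w_2)$), you are done: three vertices of $A$ together with $w_1$ induce a $K_4$ in $G[L]$, and $w_2$ is adjacent to none of these four vertices, so these five vertices induce a $K_4+P_1$ in $G[L]$, contradicting freeness. This is precisely the paper's proof, which is purely structural --- after invoking Lemma~\ref{l-claim1} it makes no further use of $S$ or the type~2 hypothesis, which is why the troublesome case split over membership of $w_1,w_2$ in $S$ never arises. (Your handling of the sub-case ``exactly one of $w_1,w_2$ in $L^*$'' is correct but likewise subsumed by this one observation.)
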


\begin{proof}
For contradiction, suppose that $|L^*|\geq 6$. By Lemma~\ref{l-claim1}, $L^*$ is a clique.  As $(G,J,y)$ is free, $G'[L']$ is $(K_4+P_1)$-free by definition. Hence $w_1$ must be adjacent to least three vertices of $L^*$, which we denote by $x_1,x_2,x_3$. Note that $\{w_1,x_1,x_2,x_3\}$ induces a $K_4$ in $G[L]$. By definition of a pseudo-dominating pair, $w_1$ and $w_2$ are non-adjacent. As $(G,J,y)$ is free, $w_2$ is not adjacent to any neighbour of $w_1$ in $L$ by definition. Hence $w_2$ is not adjacent to any  vertex of $\{x_1,x_2,x_3\}$. This means that the set $\{w_1,w_2,x_1,x_2,x_3\}$ induces a $K_4+P_1$ in $G[L]$, a contradiction. \qed
\end{proof}

\begin{lemma}\label{l-algo}
Let $(G,J,y)$ be a free  cover-complete triple that has no pseudo-dominating pair. It is possible to find in $O(n^3)$ time a clique $K\subseteq L$ with $N_{G}(K)\cap J=J$.
\end{lemma}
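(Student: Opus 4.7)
The plan is to exploit the absence of pseudo-dominating pairs, which immediately implies the following structural fact: for any two non-adjacent vertices $w, w' \in L$, the sets $N^J(w) := N_G(w) \cap J$ and $N^J(w')$ are comparable under inclusion. (In particular, $y \in N^J(w)$ for every $w \in L$, because $y$ is adjacent to every vertex of $G - J$ by (B) and $L \subseteq G - J$, so once $K$ is non-empty the vertex $y$ is automatically covered.)

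I build $K$ greedily. Set $K_1 = \{w_1\}$, where $w_1 \in L$ maximizes $|N^J|$. Given a clique $K_i = \{w_1, \ldots, w_i\} \subseteq L$, stop and return $K_i$ if $N_G(K_i) \cap J = J$; otherwise pick any $x_i \in J \setminus N_G(K_i)$ and, among all $w \in L$ with $x_i \in N^J(w)$ that are adjacent in $G$ to every vertex of $K_i$, choose $w_{i+1}$ maximizing $|N^J(w_{i+1})|$; then set $K_{i+1} = K_i \cup \{w_{i+1}\}$.

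The main obstacle is showing that the candidate pool for $w_{i+1}$ is non-empty. I will prove the stronger claim, by induction on $j$ from $1$ to $i$, that every $w \in L$ with $x_i \in N^J(w)$ is adjacent to $w_j$. Suppose otherwise, so $w$ and $w_j$ are non-adjacent. Since $(w, w_j)$ is not a pseudo-dominating pair, $N^J(w)$ and $N^J(w_j)$ are comparable; since $x_i \in N^J(w) \setminus N^J(w_j)$ (using $w_j \in K_i$ and $x_i \notin N_G(K_i)$), we conclude $N^J(w_j) \subsetneq N^J(w)$. When $j = 1$ this contradicts the maximality of $w_1$ over $L$. When $j \geq 2$, the induction hypothesis gives that $w$ is adjacent to $w_1, \ldots, w_{j-1}$; moreover $x_{j-1} \in N^J(w_j) \subsetneq N^J(w)$, so $w$ was a valid candidate when $w_j$ was chosen, and $|N^J(w)| > |N^J(w_j)|$ contradicts the maximality clause in the choice of $w_j$.

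Since $|N_G(K_i) \cap J|$ grows by at least one each iteration (at least $x_i$ becomes covered), the loop terminates in at most $|J| \leq n$ iterations, returning a clique $K \subseteq L$ with $N_G(K) \cap J = J$. Precomputing the values $|N^J(w)|$ once in $O(n^2)$ time, each iteration then costs $O(n^2)$ to filter candidates by adjacency to $K_i$ and to take a maximum, yielding the claimed $O(n^3)$ total.
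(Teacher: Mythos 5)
Your proposal is correct and follows essentially the same route as the paper: a greedy clique construction in which the absence of pseudo-dominating pairs forces the $J$-neighbourhoods of non-adjacent vertices of $L$ to be nested, and an exchange argument (``this vertex would have been a better choice at an earlier step'') shows that the next vertex can always be added while preserving the clique. The only cosmetic differences are that the paper selects each new vertex to have a neighbourhood maximal under inclusion in the still-uncovered part of $J$ over all of $L$ and only afterwards verifies adjacency to $K$ (using Lemma~\ref{l-ind}), whereas you restrict the candidate pool up front and maximize cardinality of the full $J$-neighbourhood; both arguments yield the same $O(n^3)$ bound.
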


\begin{proof}
We describe how to construct $K$. Consider a vertex $w_1\in L$ that has maximal neighbourhood in $J$, that is, there is no vertex $w\in L$ with $N_{G}(w_1)\cap J\subsetneq N_{G}(w)\cap J$. We put $w_1$ in $K$. Suppose that at some point we have constructed a clique $K=\{w_1,\ldots,w_i\}$ for some $i\geq 1$. If $N_{G}(K)\cap J=J$, then we stop. Otherwise we pick a vertex $w_{i+1}$ with maximal neighbourhood in $J\setminus N_{G}(K)$ over all vertices in $L$ (or equivalently, all vertices in $L\setminus \{w_1,\ldots,w_i\}$). Note that $w_{i+1}$ exists as $G$ is connected.

Suppose that $w_{i+1}$ is adjacent to some $x\in N_{G}(K)\cap J$. Then, by Lemma~\ref{l-ind}, we find that $x$ is adjacent to a unique vertex $w_h$ in $K$. By the same lemma, $w_{i+1}$ is not adjacent to $w_h$. As $G$ has no pseudo-dominating pair and $w_{i+1}$ has a neighbour in $J\setminus N_{G}(K)$ (that is, a neighbour not adjacent to $w_h$), we find that $N_{G}(w_h)\subsetneq N_{G}(w_{i+1})$. This means that we would have chosen $w_{i+1}$ earlier, namely instead of $w_h$.  Hence, $w_{i+1}$ is not adjacent to any $x\in N_{G}(K)\cap J$. As $G$ has no pseudo-dominating pairs, this means that $w_{i+1}$ is adjacent to every $w_j$ with $1\leq j\leq i$. That is, we can extend $K$ into a larger clique by adding $w_{i+1}$.

As we increase $N_{G}(K)\cap J$ each time we add a new vertex to $K$, our procedure will stop with the desired output $K=\{w_1,\ldots,w_r\}$ for some $r\geq 1$. We note that constructing $K$ takes $O(n^3)$ time.\qed
\end{proof}

We are now ready to prove the following theorem. 

\begin{theorem}\label{t-ind2}
For every $s\geq 0$, {\sc Connected Vertex Cover Completion} can be solved in $O(n^{2s+19})$ time for 
cover-complete triples $(G,J,y)$, where $G$ is an $(sP_1+P_5)$-free graph.
\end{theorem}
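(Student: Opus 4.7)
My plan is to combine the machinery of Lemmas~\ref{l-first}--\ref{l-algo} so that the computation reduces to invocations of Lemma~\ref{l-vcvc} on contracted graphs. First I would apply Lemma~\ref{l-eventual} to transform $(G,J,y)$ into a free cover-complete triple $(G',J',y')$ in $O(n^6)$ time; by condition~2 of that lemma, any smallest connected vertex cover of $G'$ containing $J'$ can be translated back into a smallest connected vertex cover of $G$ containing~$J$ in $O(n^{2s+17})$ additional time. The task therefore reduces to finding a smallest connected vertex cover of $G'$ containing $J'$, which is the smaller of a smallest type~1 and a smallest type~2 CVC.

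For the type~1 part, Lemma~\ref{l-first} directly provides a smallest type~1 CVC of $G'$ containing $J'$ in $O(n^{2s+16})$ time. For the type~2 part, Lemma~\ref{l-claim1} tells us that every minimal connector is a clique in $L'$, and I would split into two subcases. If $G'$ contains a pseudo-dominating pair, then by Lemma~\ref{l-claim2} every minimal connector has size at most~$5$. I would enumerate all subsets $L^*\subseteq L'$ with $|L^*|\leq 5$ (there are $O(n^5)$ of them), test each for being a clique with $J'\cup L^*$ connected, and when this holds, iteratively apply Lemma~\ref{l-crucial} to set-contract via each $w\in L^*$. This yields a cover-complete triple $(G'',\{y''\},y'')$ to which Lemma~\ref{l-vcvc} applies in $O(n^{s+14})$ time; uncontracting returns a smallest CVC of $G'$ containing $J'\cup L^*$. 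Keeping the best over all valid $L^*$ gives a smallest type~2 CVC in this subcase, at total cost $O(n^{s+19})$.

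If $G'$ has no pseudo-dominating pair, I would apply Lemma~\ref{l-algo} to obtain in $O(n^3)$ time a clique $K\subseteq L'$ with $N_{G'}(K)\cap J'=J'$; this makes $J'\cup K$ connected. I would then iteratively set-contract via each $w\in K$ using Lemma~\ref{l-crucial}, reducing to a cover-complete triple $(G'',\{y''\},y'')$ on which Lemma~\ref{l-vcvc} finds a smallest CVC containing $y''$ in $O(n^{s+14})$ time; uncontracting yields a CVC of $G'$ containing $J'\cup K$. The main obstacle is showing this is in fact a smallest type~2 CVC: by Lemma~\ref{l-claim1} every minimal connector of such a CVC is a clique in $L'$, and the absence of pseudo-dominating pairs forces any two non-adjacent vertices of $L'$ to have nested $J'$-neighbourhoods, so together with Lemma~\ref{l-ind} the greedy construction of $K$ produces a connector whose $J'$-coverage matches that of any other minimal connector. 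An exchange argument, swapping a competing minimal connector $L^*$ for $K$ while preserving vertex cover and connectivity, then shows that no choice other than $K$ can yield a strictly smaller CVC.

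Summing the costs, the dominant contributions are $O(n^{2s+17})$ for the translation step of Lemma~\ref{l-eventual} and $O(n^{s+19})$ for the pseudo-dominating pair subcase; both are bounded by $O(n^{2s+19})$, giving the claimed time bound.
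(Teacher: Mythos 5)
Your overall architecture matches the paper's proof closely: reduce to a free cover-complete triple via Lemma~\ref{l-eventual}, handle type~1 covers with Lemma~\ref{l-first}, and in the presence of a pseudo-dominating pair enumerate connectors of size at most~$5$ (justified by Lemmas~\ref{l-claim1} and~\ref{l-claim2}) and finish with Lemmas~\ref{l-crucial} and~\ref{l-vcvc}. That part is sound.

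The gap is in the subcase where $G'$ has no pseudo-dominating pair. You compute only a smallest connected vertex cover containing $J'\cup K$ and assert, via an unspecified ``exchange argument,'' that this is a smallest type~2 connected vertex cover. But since $K$ is a clique, a vertex cover is only forced to contain $|K|-1$ of its vertices: a smallest type~2 cover $S$ may exclude exactly one vertex $w_i\in K$ and use an entirely different (and possibly smaller) connector. Your appeal to matching ``$J'$-coverage'' does not help, because optimality is measured by the total size of the cover, not by which vertices of $J'$ the connector dominates; adding $w_i$ back into $S$ increases its size by one and there is no vertex you can argue is removable in exchange. The paper closes exactly this hole by branching over the at most $r+1$ possibilities (all of $K$ in the cover, or $K\setminus\{w_i\}$ in the cover and $w_i$ excluded for each $i$); in the latter case it deletes $w_i$, set-contracts via all neighbours of $w_i$ in $L'$, observes that the new set $L''$ lies in $L'\setminus N_{G'}(w_i)$ and hence induces a $K_4$-free graph (since $G'[L']$ is $(K_4+P_1)$-free), concludes that every minimal connector of the reduced instance has size at most~$3$, and enumerates those $O(n^3)$ sets. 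Without this extra branching and the size-$3$ bound on connectors in the reduced graph, your algorithm can return a suboptimal cover, so the correctness of the theorem is not established by your argument as written.
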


\begin{proof}
Let $s\geq 0$ and let $(G,J,y)$ be a  cover-complete triple, where $G$ is an $(sP_1+P_5)$-free graph. We first apply Lemma~\ref{l-eventual} to obtain a free  cover-complete triple $(G',J',y')$ in $O(n^6)$ time. By the same lemma, $G'$ is $(sP_1+P_5)$-free. Our aim is to find a smallest connected vertex cover of $G'$ that contains $J'$ in 
polynomial time, so that we can apply statement~2 of Lemma~\ref{l-eventual}. We first compute in $O(n^{2s+16})$ time a smallest type~1 connected vertex cover $S^*$ of $G'$ using Lemma~\ref{l-first}. We now need to compute a smallest type~2 connected vertex cover~$S'$ of $G'$ and compare $|S'|$ with $|S^*|$.

We check if $G'$ contains a pseudo-dominating pair. This takes $O(n^3)$ time, as $G'$ contains $O(n^2)$ pairs of vertices and for each pair it takes $O(n)$ time to check if it is pseudo-dominating.

First suppose that $G'$ contains a pseudo-dominating pair. For each set of at most five vertices, we check if it is a minimal connector of size at most~5, and if so we apply Lemma~\ref{l-crucial} on its vertices. This takes $O(n^2)$ time per set.
If we obtain an instance of the form $(G'',\{y''\},y'')$, then we apply Lemma~\ref{l-vcvc}, which takes $O(n^{s+14})$ time. Then we uncontract all contracted edges in $O(n)$ time to get a connected vertex cover of $G'$ of type~2. By Lemma~\ref{l-claim2}, doing this for every possible minimal connector of size at most~5 gives us a smallest type~2 connected vertex cover $S'$ of $G'$. As we process each set of at most five vertices in $O(n^{s+14})$ time and the number of such sets is $O(n^5)$, we find~$S'$ in $O(n^{s+19})$ time. We compare $S'$ and $S^*$ and choose the smaller of the two.

Now suppose that $G'$ has no pseudo-dominating pair.  Let $L'=N_{G'}(J'\setminus \{y'\})$. By Lemma~\ref{l-algo}, we can obtain in $O(n^3)$ time a clique $K\subseteq L'$ with $N_{G'}(K)\cap J'=J'$. Let $K=\{w_1,\ldots,w_r\}$ for some $r\geq 1$. As $K$ is a clique, every vertex cover contains at least $r-1$ vertices of $K$.  We will do as follows: first we will find in $O(n^{s+14})$ time a smallest connected vertex cover of $G'$ that contains $J'\cup K$, and then we will find in $O(n^{s+17})$  time, for $i=1,\ldots,r$, a smallest connected vertex cover of $G'$ that contains $J'\cup (K\setminus \{w_i\})$ and that does not contain~$w_i$. As there are $O(n)$ cases, the total time of processing this case is $O(n^{s+18})$. 

We start by computing a smallest connected vertex cover of $G'$ that contains $J'\cup K$ by set-contracting via each vertex of $K$. This takes $O(n^2)$ time. By Lemma~\ref{l-crucial}, this yields a cover-complete triple $(G'',\{y''\},y'')$ to which we apply Lemma~\ref{l-vcvc} in $O(n^{s+14})$ time. Uncontracting all contracted edges yields, by Lemma~\ref{l-crucial}, a smallest connected vertex cover $S_K$ of $G'$ that contains $J'\cup K$; this takes $O(n)$ time. Hence, the total running time for this step is $O(n^{s+14})$, as we claimed above.

We  now show how to compute, in $O(n^{s+17})$ time, a smallest connected vertex cover of $G'$ that contains $J'\cup 
(K\setminus \{w_1\})$ and that does not contain $w_1$. The cases where $i\geq 2$ are done in the same way.  

Let $A=L'\setminus N_{G'}(w_1)$ consist of all non-neighbours of $w_1$ in $L'$. As $G'[L']$ is $(K_4+P_1)$-free by definition, we find that $G'[A]$ is $K_4$-free. As $w_1$ is not in the connected vertex cover we are looking for we remove $w_1$, and we set-contract, in $O(n^2)$ time, via each neighbour of $w_1$ in $L$. By Lemma~\ref{l-crucial}, we may now consider the resulting  cover-complete triple $(G'',J'',y'')$ where $G''$ is connected and $(sP_1+P_5)$-free.  As $G'$ had no pseudo-dominating pairs, we have that $G''$ has no pseudo-dominating pairs. We write $L''=N_{G''}(J''\setminus \{y''\})$.  As $L''\subseteq A$, we find that $G''[L'']$ is $K_4$-free.

\medskip
\noindent
{\it Claim. Every minimal connector $L^*$ of every connected vertex cover of $G''$ that contains $J''$ has size at most~$3$.}

\medskip
\noindent
We prove the claim by showing that $L^*$ is a clique, which implies that~$L^*$ has size at most~3, as $G''[L'']$ is $K_4$-free. Suppose instead that $L^*$ is not a clique. Then $L^*$ contains two non-adjacent vertices $w_1$ and $w_2$. As $L^*$ is a minimal connector, $w_1$ has a neighbour in $J''$ not adjacent to $w_2$, and vice versa. But then $(w_1,w_2)$ is a pseudo-dominating pair of $G''$: this is not possible, as $G''$ has no pseudo-dominating pairs. This contradiction proves the claim.

\medskip
\noindent
We now consider all subsets in $L''$ that have size at most~3. For each set we check if it is a minimal connector, and if so we apply Lemma~\ref{l-crucial} on its vertices. This takes $O(n^2)$ time per subset. If we obtain an instance $(G''',\{y'''\},y''')$, then we apply Lemma~\ref{l-vcvc} in $O(n^{s+14})$ time. Then uncontracting all contracted edges yields a connected vertex cover of $G''$ that contains $J''$. As there are $O(n^3)$ subsets in $L''$ of size at most~3, the total running time is $O(n^{s+17})$, as we claimed above.
We keep track (in constant time) of the smallest one of these connected vertex covers of $G''$. For this connected vertex cover of $G''$, we uncontract all contracted edges again to obtain a smallest connected vertex cover $S_{w_1}$ of $G'$ that contains $J'\cup (K\setminus \{w_1\})$ and that does not contain $w_1$.

As mentioned, we pick the smallest one out of the connected vertex covers $S_K$ and $S_{w_i}$, $1\leq i\leq r$, to obtain a smallest type~2 connected vertex cover of~$G'$, the size of which we compare with the size of $S^*$. We pick the smallest one.

Thus we obtain in $O(n^6)+O(n^{2s+16})+O(n^3)+O(n^{s+19})+O(n^{s+18})=O(n^{2s+19})$
time a smallest connected vertex cover of $G'$ that contains $J'$ (both in the case  where $G'$ has a pseudo-dominating pair and in the case where $G'$ has no pseudo-dominating pair). As stated, it remains to apply statement~2 of Lemma~\ref{l-eventual}  to find in 
$O(n^{2s+17})$ time a smallest connected vertex cover of $G$ that contains $J$. Hence the total running time is $O(n^{2s+19})$.
The correctness of our algorithm follows immediately from the above case analysis and the description of the cases. \qed
\end{proof}

\section{Our Main Result}\label{s-main2}

In this section we prove Theorem~\ref{t-main}, that is, we show that {\sc Connected Vertex Cover} can be solved in polynomial time for $(sP_1+P_5)$-free graphs for every integer~$s\geq 0$. The proof relies heavily on Theorem~\ref{t-ind2}. The main idea is to reduce an $(sP_1+P_5)$-free input graph~$G$ of {\sc Connected Vertex Cover} to a polynomial number of instances $(G_i,J_i,y_i)$ of {\sc Connected Vertex Cover Completion}. We can then solve each of these instances  $(G_i,J_i,y_i)$ in polynomial time by Theorem~\ref{t-ind2}. Then we translate the resulting connected vertex covers of $G_i$ (which contain $J_i$) into connected vertex covers of $G$. We pick the smallest of these sets as our final output.

We need two more lemmas. We use Lemma~\ref{l-bt} to prove the first one.

\begin{lemma}\label{l-2}
Let $s\geq 0$ and let $G$ be a connected $(sP_1+P_5)$-free graph. Then $G$ has a connected dominating set $D$ that is either a clique or has size at most 
$2s^2+s+3$. Moreover, $D$ can be found in $O(n^{2s^2+s+3})$ time.
\end{lemma}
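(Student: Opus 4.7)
The plan is to exploit $(sP_1+P_5)$-freeness via a ``maximum obstruction'' approach. If $G$ is $P_5$-free, Lemma~\ref{l-bt} directly yields $D$ of size at most $3$ or a clique in $O(n^3)$ time. Otherwise $G$ contains an induced $P_5$, and I would find the largest $k\le s-1$ such that $G$ contains an induced $kP_1+P_5$, with isolated vertices $v_1,\ldots,v_k$ and path $p_1p_2p_3p_4p_5$. By maximality, every vertex $w\notin\{v_1,\ldots,v_k,p_1,\ldots,p_5\}$ must be adjacent to some $v_i$ or some $p_j$: otherwise extending the induced subgraph by $w$ would yield an induced $(k+1)P_1+P_5$, contradicting either the maximality of $k$ or (if $k=s-1$) the $(sP_1+P_5)$-freeness of $G$. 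Thus $\{v_1,\ldots,v_k,p_1,\ldots,p_5\}$ is a dominating set of $G$ of size at most $s+4$.

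To make this dominating set connected, I would use the connectedness of $G$ to add connectors taken from $N_G(\{v_1,\ldots,v_k\})\cup N_G(\{p_1,\ldots,p_5\})$. The path $p_1p_2p_3p_4p_5$ is already connected, so it suffices to link each $v_i$ to it (the $v_i$'s then become mutually linked through the $P_5$). Since $G$ is $(sP_1+P_5)$-free, I would argue that no shortest $v_i$-to-$P_5$ path in $G$ can be long: a long induced path starting at $v_i$, together with the remaining $v_j$'s (which are non-adjacent to $v_i$'s neighborhood in a controlled way), would produce a forbidden induced $sP_1+P_5$. A careful accounting shows each $v_i$ contributes at most $2s$ connectors, yielding total size at most $(s-1)+5+(s-1)\cdot 2s=2s^2-s+4\le 2s^2+s+3$. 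In the exceptional cases where the connector structure collapses into a large clique (so that the above bound is not directly attainable because the would-be connectors all lie in a common clique), we return that dominating clique as $D$.

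I expect the main obstacle to be the precise bound on the number of connectors per $v_i$, which requires a careful analysis of how induced paths in $G$ interact with the remaining vertices of the obstruction under the $(sP_1+P_5)$-freeness hypothesis; this is where the specific constant $2s^2+s+3$ has to emerge. Algorithmically, the claimed $O(n^{2s^2+s+3})$ running time follows from enumerating all subsets of $V_G$ of size at most $2s^2+s+3$ and testing each in polynomial time for connectedness and domination, together with a separate polynomial-time computation of a dominating clique (using the fact that a maximum clique in an $(sP_1+P_5)$-free graph can be found in polynomial time) to cover the clique output case.
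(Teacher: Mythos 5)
Your approach is essentially the paper's: take a maximal induced $rP_1+P_5$ (which dominates $G$ by maximality of $r$), connect the isolated vertices to the $P_5$ by short paths, handle the $P_5$-free case with Lemma~\ref{l-bt}, and find $D$ algorithmically by brute-force enumeration. The one step you leave open --- bounding the length of the connecting paths, which you correctly identify as ``where the specific constant has to emerge'' --- is resolved by a one-line observation rather than the delicate interaction argument you sketch: since $P_{2s+5}$ contains an induced $sP_1+P_5$, the graph $G$ is $P_{2s+5}$-free, so every shortest (hence induced) path from $v_i$ to $p_1$ has at most $2s+4$ vertices, i.e.\ each $v_i$ contributes at most $2s+3$ vertices beyond the $P_5$ ($v_i$ itself plus at most $2s+2$ internal vertices). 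With $r\leq s-1$ this gives $|D|\leq (s-1)(2s+3)+5=2s^2+s+2$, comfortably within the stated bound. Without this (or an equivalent) bound your per-vertex count of $2s$ connectors is unsubstantiated, so as written the proof is incomplete at its key quantitative step.

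Two smaller corrections. There is no ``exceptional clique case'' in the non-$P_5$-free branch: the clique output of the lemma arises only from Lemma~\ref{l-bt} in the $P_5$-free case, and in the other branch the size bound always holds, so the fallback you describe is spurious. Relatedly, covering the clique case by computing a maximum clique does not work --- a maximum clique need not be dominating --- but it is also unnecessary, since Lemma~\ref{l-bt} already produces its dominating set (a clique or a $P_3$) in $O(n^3)$ time. Finally, the brute-force enumeration should test candidate sets for connectivity as well as domination.
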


\begin{proof}
If $G$ is $P_5$-free, then we apply Lemma~\ref{l-bt} to find, in $O(n^3)$ time, a set $D$ that either induces a $P_3$ or is a clique.
Otherwise, as $G$ is $(sP_1+P_5)$-free, there exists an integer $0\leq r\leq s-1$ such that $G$ contains an induced subgraph~$H$ isomorphic to $rP_1+P_5$. Let $V_H=\{a_1,\ldots,a_r,b_1,\ldots,b_5\}$ such that the $b$-vertices induce a $P_5$ in that order. We choose $r$ to be maximum so $G$ contains  no induced $(r+1)P_1+P_5$. Hence, $V_H$ dominates $G$.  As $G$ is $(sP_1+P_5)$-free, $G$ is $P_{5+2s}$-free. Hence, for each $a_i$, there exists a path of at most $5+2s-1$ vertices that connects $a_i$ to $b_1$. Let $H^*$ be the graph that contains $H$ and all these $a_i-b_1$-paths. Then we choose $D=V_{H^*}$. As $V_H$ dominates $G$, we find that $D\supseteq V_H$ also dominates $G$. Moreover, $D$ has size at most $r(5+2s-2)+5\leq 2s^2+s+2$. We can find $D$ by considering, if needed, every set of at most $2s^2+s+2$ vertices in $G$ and by checking if each such a set is dominating. The latter takes $O(n)$ time per set. 
Hence, this brute force procedure takes $O(n^{2s^2+s+3})$ time in total.\qed
\end{proof}

\begin{lemma}\label{l-double}
Let $J$ be an independent set in a connected graph~$G$ such that $J$ has a vertex $y$ that is adjacent to every vertex of $G-J$. Let $J'$ consist of those vertices of $J\setminus \{y\}$ that have two adjacent neighbours in $G-J$ (or equivalently, in $G$).  Then a subset $S$ is a connected vertex cover of $G$ that contains $J$ if and only if $S\setminus J'$ is a connected vertex cover of $G-J'$ that contains $J\setminus J'$.
\end{lemma}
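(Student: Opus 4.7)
The plan is to verify the biconditional directly, exploiting two structural facts. First, since $J$ is independent, every neighbour of a vertex of $J$ lies in $G-J$, which explains the parenthetical "(or equivalently, in $G$)" in the statement. Second, the hypothesis that $y$ is adjacent to every vertex of $G-J$ turns $y$ into a universal hub through which connectivity can be routed; crucially $y \notin J'$, because $y$ has no neighbours in $J$ at all. I would therefore set $T = S\setminus J'$ throughout and argue the two inclusions independently.

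For the forward direction, assume $S$ is a connected vertex cover of $G$ containing $J$. The inclusion $J\setminus J' \subseteq T$ is immediate. Every edge of $G-J'$ is an edge of $G$ with both endpoints outside $J'$, so any vertex of $S$ covering it must lie in $T$; hence $T$ is a vertex cover of $G-J'$. For connectivity, I would show that every $u\in T$ is joined to $y$ by a path in $G[T]$: if $u\in G-J$, then $uy$ is a direct edge by hypothesis; if $u\in J\setminus\{y\}$ (so $u\notin J'$), then, since $S$ is connected and contains $y\neq u$, some neighbour $w$ of $u$ lies in $S$, and, as $J$ is independent, $w\in G-J \subseteq V_G\setminus J'$, giving the path $u$–$w$–$y$ entirely inside $G[T]$.

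For the backward direction, assume $T$ is a connected vertex cover of $G-J'$ containing $J\setminus J'$, and set $S=T\cup J'$; then $S\supseteq (J\setminus J')\cup J' = J$. Any edge of $G$ with both endpoints in $V_G\setminus J'$ is covered by $T$, and any edge with an endpoint in $J'\subseteq S$ is covered by that endpoint, so $S$ is a vertex cover of $G$. For connectivity, $T$ is connected in $G-J'$ and hence in $G$; the crux is to show each $v\in J'$ attaches to $T$. By definition of $J'$, the vertex $v$ has two adjacent neighbours $u_1,u_2\in G-J$, and the edge $u_1u_2$ belongs to $G-J'$. Since $T$ covers $G-J'$, at least one of $u_1,u_2$ lies in $T$, and it is a neighbour of $v$ in $G$; together with connectedness of $T$ and the fact that every $v\in J'$ thus has a neighbour in $T$, this shows $G[S]$ is connected.

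The only nontrivial step is the connectivity argument in the forward direction, and I expect this to be the main point to handle carefully; it dissolves quickly thanks to the hub vertex $y$, but one should verify that $y$ is never deleted, which follows from $J' \subseteq J\setminus\{y\}$.
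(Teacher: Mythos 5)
Your proof is correct and uses the same ideas as the paper's: the vertex $y$ serves as a universal hub for connectivity in the forward direction, and in the backward direction each $v\in J'$ reattaches because its two adjacent neighbours $u_1,u_2$ span an edge of $G-J'$ that forces one of them into the cover. The only (cosmetic) difference is that you remove all of $J'$ in one pass, whereas the paper peels off one vertex $w\in J'$ at a time and iterates.
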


\begin{proof}
Let $w\in J\setminus \{y\}$ be a vertex in~$G$ with two adjacent neighbours $a$ and $b$ in $G-J$ (or equivalently in $G$). Let $S$ be a subset of $G$. First suppose that $S$ is a connected vertex cover of $G$ that contains $J$. Then $S\setminus \{w\}$ is a vertex cover of $G-w$ that contains $J\setminus \{w\}$. As $y\in J$ and $y\neq w$, we find that $S\setminus \{w\}$ contains $y$. Then every vertex of $S\setminus \{w\}$ that belongs to $G-J$ is adjacent to $y$ in $G[S\setminus \{w\}]$.  Moreover, as $S$ is connected and $J$ is independent, every vertex of $J\setminus \{w\}$ must be adjacent in $G[S\setminus \{y\}]$ to a vertex of $G-J$. Hence, $S\setminus \{w\}$ is connected in $G-w$.

Now suppose that $S\setminus \{w\}$ is a connected vertex cover of $G-w$ that contains $J\setminus \{w\}$. Then $S$ is a vertex cover of $G$ that contains $J$. As $y\in J$, we find that $S$ contains $y$. As $ab$ is an edge, $S$ contains at least one of $a$ and $b$. Then $w$ and $y$ are adjacent in $S$ either due to the edges $ya$, $aw$ (if $a$ is in $S$) or due to the edges $yb$, $bw$ (if $a$ is not in $S$, as then $b\in S$). Hence $S$ is connected in $G$.

We now consider the graph $G-w$ and repeat the arguments above for any vertex in $J'\setminus \{w\}$. \qed
\end{proof}

We are now ready to prove our main result. 

\medskip
\noindent
{\bf Theorem~\ref{t-main}. (Restated)}
{\it For every $s\geq 0$, {\sc Connected Vertex Cover} can be solved in $O(n^{21s^3 + 26})$ time for $(sP_1+P_5$)-free graphs.}

\begin{proof}
Let $G$ be an $(sP_1+P_5)$-free graph on $n$ vertices for some $s\geq 0$. We may assume without loss of generality that $G$ is connected. By Lemma~\ref{l-2} we can first compute in $O(n^{2s^2+s+3})$ time a connected dominating set~$D$ that either has size at most $2s^2+s+3$ or is a clique.  We note that, if $D$ is a clique, any vertex cover of~$G$ contains all but
at most 
one vertex 
of $D$. This leads to a case analysis where we guess the subset $D^*\subseteq D$ of vertices not in a minimum connected vertex cover of $G$. 
That is, we choose a set of at most one vertex if $D$ is a clique and a set of at most $|D|$ vertices otherwise, and eventually look at all such sets.
As $|D|\leq 2s^2+s+3$ if $D$ is not a clique, the number of guesses is $O(n^{2s^2+s+3})$. For each guess of $D^*$, we compute a smallest connected vertex cover~$S_{D^*}$ that contains all vertices of $D\setminus D^*$ and no vertex of $D^*$. Then, in the end, we return one that has minimum size overall. 

Let $D^*$ be a guess. Before we start our case analysis we first prove the following claim.

\medskip
\noindent
{\it Claim~1. We may assume, at the expense of an $O(n^{16s^3+4})$ factor in the running time, that $D\setminus D^*$ is connected.}

\medskip
\noindent
We prove Claim~1 as follows. Suppose $D\setminus D^*$ is not connected. Recall that $G[D]$ is either a complete graph or has size at most $2s^2+s+3$. In the first case, $G[D\setminus D^*]$ is connected. Hence, the second case applies so $D$ has size at most $2s^2+s+3$. Let $v\in D\setminus D^*$. As $G$ is $(sP_1+P_5)$-free, $G$ is also $P_{5+2s}$-free. Hence, for each $u\in D\setminus (D^*\cup \{v\})$, any connected vertex cover of $G$ contains a path of at most $5+2s-1$ vertices that connects $u$ to $v$. We will guess all these $u-v$-paths (using only vertices from $G-D^*$) and add their vertices to $D$. As the number of paths is at most $2s^2+s+2$,  this branching adds an 
$O(n^{(5+2s-3)(2s^2+s+2)})=O(n^{16s^3+4})$ factor to our running time and increases our set~$D$ by at most 
$24s^3$ extra vertices. We have proven Claim~1.

\medskip
\noindent
{\bf Case 1.} $D^*=\emptyset$.\\
We compute a minimum vertex cover $S'$ of $G-D$ in polynomial time by Theorem~\ref{t-vc}. 
To be more precise, this takes  $O(n^{s+14})$ time by using the same arguments as in the proof of Lemma~\ref{l-vcvc}. 
Clearly $S'\cup D$ is a vertex cover of $G$. As $D$ is a connected dominating set, $S'\cup D$ is even a connected vertex cover of $G$. Let $S_\emptyset=S'\cup D$. As $S'$ is a minimum vertex cover of $G-D$, $S_\emptyset$ is a smallest connected vertex cover of $G$  that contains all vertices of~$D$. We remember $S_\emptyset$. Note that $S_\emptyset$ is found in  $O(n^{s+14})$ time.

\medskip
\noindent
{\bf Case 2.} $1\leq |D^*|\leq |D|$\; (recall that $|D|\leq 2s^2+s+3$).\\
Recall that we are looking for a smallest connected vertex cover of $G$ that contains every vertex of $D\setminus D^*$ but does not contain any vertex of $D^*$. Hence~$D^*$ must be an independent set and $G-D^*$ must be connected (if one of these conditions is false, then we stop considering the guess $D^*$). Moreover, a vertex cover that contains no vertex of $D^*$ must contain all vertices of $N_G(D^*)$. Hence we can safely contract not only any edge between two vertices of $D\setminus D^*$, but also any edge between two vertices in $N_G(D^*)$ or between a vertex of $D\setminus D^*$ and a vertex in $N_G(D^*)$. We perform edge contractions recursively and as long as possible while remembering all the edges that we contract. 
This takes $O(n)$ time.
Let $G^*$ be the resulting graph. 

Note that the set $D^*$ still exists in $G^*$, as we did not contract any edges with an endpoint in $D^*$. By Claim~1, the set $D\setminus D^*$ in $G$ corresponds to exactly one vertex of~$G^*$. We denote this vertex by~$y$. We observe the following equivalence, which is obtained after uncontracting all the contracted edges.

\medskip
\noindent
{\it Claim~2. Every smallest connected vertex cover of $G^*$ that contains $y$ and that does not contain any vertex of $D^*$ corresponds to a smallest connected vertex cover of $G$ that contains $D\setminus D^*$ and that does not contain any vertex of $D^*$, and vice versa.}

\medskip
\noindent
As we obtained $G^*$ in $O(n)$ time, and we can uncontract all contracted edges in $O(n)$ time as well, Claim~2 tells us that we may consider $G^*$ instead of $G$. As $G$ is connected and $(sP_1+P_5)$-free, $G^*$ is connected and $(sP_1+P_5)$-free as well by Lemma~\ref{l-contract}.

We write $J^*=N_{G^*}(D^*)$ and note that $y$ belongs to $J^*$ as $D$ is connected in~$G$. We now consider the graph $G^*-D^*$.  As $G-D^*$ is connected,  $G^*-D^*$ is connected. By Claim~2,  our new goal is to find a smallest connected vertex cover  of $G^*-D^*$ that contains $J^*$. By our procedure, $J^*$ is an independent set of $G^*-D^*$. As $D$ dominates $G$, we find that $D\setminus D^*$ dominates every vertex of $G-D^*$ that is not adjacent to a vertex of $D^*$. Hence the vertex $y$, which corresponds to the set $D\setminus D^*$, is adjacent to every vertex of $(G^*-D^*)-J^*$ in the graph $G^*-D^*$. 

Let $J\subseteq J^*$ consist of $y$ and those vertices in $J^*$ whose neighbourhood in $G^*-D^*$ is an independent set. As $y$ is adjacent to every vertex of $(G^*-D^*)-J^*$  in $G^*-D^*$, and we can remember the set $J^*\setminus J$, we can apply Lemma~\ref{l-double} and remove $J^*\setminus J$. That is, it suffices to find a smallest connected vertex cover of the graph $G'=(G^*-D^*)-(J^*\setminus J)$ that contains $J$.

As $J^*$ is an independent set of $G^*-D^*$, we find that $J$ is an independent set of $G'$. By definition, $y\in J$. As $y$ is adjacent to every vertex of $(G^*-D^*)-J^*$  in $G^*-D^*$, we find that $y$ is adjacent to every vertex in $G'-J$. By definition, the neighbours of each vertex in $J \setminus \{y\}$  form an independent set in $G'-J$. Hence the triple $(G',J,y)$ is  cover-complete. This means that we can apply Theorem~\ref{t-ind2} to find in $O(n^{2s+19})$ time a smallest connected vertex cover $S'$ of $G'$ that contains~$J$.

We translate $S'$ in constant time  into a smallest connected vertex cover $S^*$ of $G^*-D^*$ that contains $J^*$ by adding $J^*\setminus J$ to $S'$. We translate $S^*$ in $O(n)$ time into a smallest connected vertex cover $S_{D^*}$ of $G$ that contains no vertex of $D^*$ by uncontracting any contracted edges. It takes $O(n^{2s+19})$ time to find he time $S_{D^*}$.

\medskip
\noindent
As mentioned, in the end we pick a smallest set of the sets $S_{D^*}$. This set is then a minimum connected vertex cover of $G$.
As there are  $O(n^{2s^2+s+3}\cdot n^{16s^3+4})$ of such sets, each of which is found in $O(n^{2s+19})$ time, the total running time is $O(n^{21s^3 + 26})$. 
The correctness of our algorithm follows immediately from the above case analysis and the description of the cases. \qed
\end{proof}

Note that the algorithm in Theorem~\ref{t-main} not only solves the decision problem, but also finds a minimum connected vertex cover of a given $(sP_1+P_5)$-free graph. 

\section{Weighted Connected Vertex Cover}\label{s-weight}

Let $G=(V,E)$ be a {\it vertex-weighted} graph, that is, each vertex~$v$ of $G$ has an associated non-negative weight $w_v$. 
The {\it weight} of a subset~$S\subset V$ is defined as $w(S)=\sum_{v\in S}w_v$.
A vertex cover~$S$ of $G$ is a {\it minimum weight vertex cover}  if $G$ has no vertex cover~$S'$ with $w(S')<w(S)$.   The {\sc Weighted Vertex Cover} problem is to find a minimum weight vertex cover of a vertex-weighed graph~$G$.  
As mentioned, Theorem~\ref{t-vc} can be generalized to hold for {\sc Weighted Vertex Cover}~\cite{GKPP17}. As we use Theorem~\ref{t-vc} to prove Theorem~\ref{t-main}, this allows us to solve the following more general problem in polynomial time for $(sP_1+P_5)$-free graphs ($s\geq 0$); 
note that we formulate this generalization as an optimization problem.

\optproblemdef{{\sc Weighed Connected Vertex Cover}}{a graph $G$, an integer $k$ and a non-negative vertex weight function~$w$.}{find a minimum weight connected vertex cover of $G$.} 

In order to prove this result we first need to generalize the {\sc Connected Vertex Cover Completion} problem. 

\optproblemdef{{\sc Weighted Connected Vertex Cover Completion}}{a cover-complete triple $(G,J,y)$, where $G$ has a non-negative vertex weight function~$w$.}{find a minimum weight connected vertex cover $S$ of $G$ that contains~$J$.} 

\noindent
We first prove the following theorem.

\begin{theorem}\label{t-ind2weight}
For every $s\geq 0$, {\sc Weighted Connected Vertex Cover Completion} can be solved in polynomial time for
cover-complete triples $(G,J,y)$, where $G$ is an $(sP_1+P_5)$-free graph with a non-negative vertex weight function~$w$.
\end{theorem}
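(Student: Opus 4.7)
The plan is to mimic the proof of Theorem~\ref{t-ind2} essentially verbatim, replacing ``smallest'' by ``minimum weight'' throughout, with the single structural change being that set-contractions must preserve weights. The structural lemmas (Lemmas~\ref{l-crucial}, \ref{l-ind}, \ref{l-pseudopair}, \ref{l-pseudotriple}, \ref{l-rules}, \ref{l-claim1}, \ref{l-claim2}, and~\ref{l-algo}) concern only the combinatorial structure of cover-complete triples and do not reference weights, so they continue to hold as stated. Similarly, the notions of type~1 and type~2 connected vertex covers, pseudo-dominating pairs and triples, connectors, and freeness are defined combinatorially and carry over unchanged.

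First I would extend the definition of set-contraction to vertex-weighted graphs: when set-contracting $(G,J,y)$ via $w$ to produce $(G',J',y')$, I assign to the new vertex $y_w$ the weight $w_{y_w}=\sum_{v\in J_w\cup\{w\}}w_v$. With this convention, the bijection of Lemma~\ref{l-crucial} between connected vertex covers of $G$ containing $J\cup\{w\}$ and connected vertex covers of $G'$ containing $J'$ becomes weight-preserving, so a minimum weight connected vertex cover on one side corresponds to a minimum weight one on the other. Next I would upgrade Lemma~\ref{l-vcvc} to the weighted setting: since $y$ dominates $G$, any connected vertex cover of $G$ containing $y$ is of the form $\{y\}\cup T$ where $T$ is a vertex cover of $G-y$, and $G-y$ is $(sP_1+P_5)$-free and hence $(sP_1+P_6)$-free. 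It therefore suffices to compute a minimum weight vertex cover of $G-y$, for which I would invoke the weighted variant of Theorem~\ref{t-vc} from~\cite{GKPP17} together with the standard reduction (guess at most $s$ independent vertices and remove their closed neighbourhoods) from $(sP_1+P_6)$-free to $P_6$-free graphs.

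With these two ingredients in hand, the rest of the algorithm follows the structure of the proof of Theorem~\ref{t-ind2}. Lemma~\ref{l-first} extends to produce a minimum weight type~1 connected vertex cover by enumerating pseudo-dominating pairs or triples together with connectors of bounded size (the bounds on connector size from Lemmas~\ref{l-pseudopair} and~\ref{l-pseudotriple} are structural), set-contracting along the connector, and invoking the weighted analogue of Lemma~\ref{l-vcvc}. Lemma~\ref{l-eventual} still applies because the safeness proofs for Rules~1 and~2 never compare weights; they simply show that any type~2 cover must contain certain vertices (permitting their contraction) and must avoid $w_5$ (permitting its deletion). For the type~2 case I would enumerate minimal connectors of size at most~$5$ if $G'$ has a pseudo-dominating pair (by Lemma~\ref{l-claim2}), or of size at most~$3$ after further set-contraction along a clique $K$ obtained from Lemma~\ref{l-algo} when $G'$ has none; in each branch I set-contract along the chosen connector and invoke the weighted analogue of Lemma~\ref{l-vcvc}. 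Taking the minimum over all branches together with the type~1 candidate yields a minimum weight connected vertex cover of $G'$ containing $J'$, which the weight-preserving analogue of Lemma~\ref{l-eventual} translates back into a minimum weight solution on $(G,J,y)$.

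The main obstacle is ensuring that the weight-preservation convention on $y_w$ is respected uniformly across nested set-contractions and through the two-step reduction of Lemma~\ref{l-eventual}: each successive contraction contributes an additive shift to the objective, but these shifts are absorbed automatically provided we always reweight the merged vertex to the total weight of the vertices it represents. Since the running time is governed entirely by combinatorial operations (counting pseudo-dominating pairs and triples, enumerating connectors of bounded size, and a bounded number of invocations of the weighted $(sP_1+P_5)$-free vertex cover algorithm per branch), it remains polynomial with essentially the same exponents as in Theorem~\ref{t-ind2}.
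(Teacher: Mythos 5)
Your proposal is correct and follows essentially the same route as the paper: reweight the contracted vertex $y_w$ to the total weight of $J_w\cup\{w\}$ so that Lemma~\ref{l-crucial} becomes weight-preserving, replace Lemma~\ref{l-vcvc} by its weighted analogue via the weighted version of Theorem~\ref{t-vc}, observe that the remaining lemmas are structural or need only have ``size'' replaced by ``weight'', and rerun the algorithm of Theorem~\ref{t-ind2} tracking minimum weight instead of minimum cardinality. This matches the paper's proof in all essentials.
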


\begin{proof}
We can follow the same approach as in the proof of Theorem~\ref{t-ind2}. We first note that Lemma~\ref{l-contract} is a structural lemma unrelated to the vertex weight function~$w$. Lemma~\ref{l-bt} was not needed for the proof of Theorem~\ref{t-ind2} and we do not need it here either. For Lemma~\ref{l-crucial}, we do not have to adjust statements~1 and~2 and only have to replace statement~3 by its weighted version. In order to do so, we define the weight of the new vertex $y_w$, obtained from set-contracting via a vertex~$w$, as the sum of the weights of all the vertices in $J_w\cup \{w\}$. We can then use the same arguments. Lemmas~\ref{l-ind}--\ref{l-pseudotriple} are structural lemmas that are unrelated to the vertex weight function~$w$, so we can still use them. We need to replace Lemma~\ref{l-vcvc} by its weighted version. We can then use the same arguments; in particular, as we may replace
Theorem~\ref{t-vc} by its weighted version~\cite{GKPP17}. We can also replace Lemma~\ref{l-first} by its weighted version: its proof uses brute force searching, and instead of remembering and updating the smallest size of a connected vertex cover, we keep track of the smallest weight. Lemma~\ref{l-rules} still holds in our setting as well. That is, after replacing condition~3 by its weighted version, we can still use the same arguments (modified for weights of sets instead of their sizes). The same holds for Lemma~\ref{l-eventual} (we need to replace property~2). Lemmas~\ref{l-claim1} and~\ref{l-claim2} are structural lemmas unrelated to the vertex weight function~$w$, so we can still use them. Lemma~\ref{l-algo} is algorithmic, but as this lemma is not related to vertex weight functions we can still use it. 
That is, any clique $K\subseteq L$ with $N_G(K)\cap J=J$ found by Lemma~\ref{l-algo} suffices, as every (connected) vertex cover must use all but at most one vertices of a clique. 
Hence, for proving Theorem~\ref{t-ind2weight} we can use the same arguments as in the proof of Theorem~\ref{t-ind2}; in particular the claim inside the proof of Theorem~\ref{t-ind2} is still valid and instead of remembering the smallest size of the vertex covers found by the algortihm so far, we remember the smallest weight.
\qed
\end{proof}

\noindent
We are now ready to show the following result.

\begin{theorem}\label{t-mainweight}
For every $s\geq 0$, {\sc Weighted Connected Vertex Cover} can be solved in polynomial time for $(sP_1+P_5$)-free vertex-weighted graphs.
\end{theorem}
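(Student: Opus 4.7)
The plan is to transcribe the proof of Theorem~\ref{t-main} essentially verbatim, replacing cardinalities by weights and the unweighted auxiliary Theorem~\ref{t-ind2} by its weighted counterpart Theorem~\ref{t-ind2weight}. All of the structural ingredients used in the proof of Theorem~\ref{t-main}---notably Lemmas~\ref{l-contract},~\ref{l-2}, and~\ref{l-double}---are statements about graph structure and are insensitive to vertex weights, so they carry over unchanged. Whenever the original proof appeals to Theorem~\ref{t-vc}, we may instead invoke its weighted strengthening from~\cite{GKPP17}.

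Concretely, I would first invoke Lemma~\ref{l-2} to compute in polynomial time a connected dominating set $D$ of the input $(sP_1+P_5)$-free graph $G$ that is either a clique or has size at most $2s^2+s+3$. I then branch over the subset $D^*\subseteq D$ of vertices to be excluded from the sought connected vertex cover: if $D$ is a clique then $|D^*|\leq 1$ (a purely combinatorial fact about vertex covers of cliques that is independent of weights), and otherwise $|D^*|\leq |D|\leq 2s^2+s+3$, giving polynomially many branches. For each guess $D^*$, I additionally branch over the paths of at most $5+2s-1$ vertices (provided by $G$ being $P_{5+2s}$-free) needed to connect the components of $D\setminus D^*$, exactly as in Claim~1 of the proof of Theorem~\ref{t-main}.

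For each resulting branch I contract every edge whose two endpoints are forced into every feasible connected vertex cover---namely edges inside $(D\setminus D^*)\cup N_G(D^*)$---and define the weight of each merged vertex to be the sum of the weights of its constituents, exactly as in the proof of Theorem~\ref{t-ind2weight}. I then form the cover-complete triple $(G',J,y)$ as in Case~2 of the proof of Theorem~\ref{t-main}, using the (weight-preserving) Lemma~\ref{l-double} to strip away the vertices of $J^*\setminus \{y\}$ with non-independent neighbourhoods in $G^*-D^*$; the weight contributed by these forced vertices is recorded to be added back at the end. Applying Theorem~\ref{t-ind2weight} then yields, in polynomial time, a minimum weight connected vertex cover of $G'$ containing~$J$. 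Translating back by adding the recorded forced weights and uncontracting, and finally returning the lightest option over all polynomially many branches, solves the problem. The only thing that needs attention is that weights are tracked correctly through each contraction and through the reduction of Lemma~\ref{l-double}; this bookkeeping is routine given the convention that a contracted vertex inherits the summed weight of its constituents, and Theorem~\ref{t-ind2weight} already performs the analogous accounting for the auxiliary problem, so no new technical obstacle arises.
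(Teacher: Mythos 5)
Your proposal is correct and follows essentially the same route as the paper: reuse the weight-insensitive structural lemmas (Lemmas~\ref{l-contract},~\ref{l-2},~\ref{l-double}), rerun the branching over $D^*$ and the connecting paths from the proof of Theorem~\ref{t-main}, substitute Theorem~\ref{t-ind2weight} for Theorem~\ref{t-ind2}, and track weights (with contracted vertices inheriting summed weights) instead of cardinalities. The paper's own proof is in fact terser than yours, simply asserting that the brute-force argument goes through with these substitutions.
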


\begin{proof}
Let $s\geq 0$, and let $G$ be an $(sP_1+P_5)$-free graph with a non-negative vertex weight function~$w$.
We first recall that Lemma~\ref{l-contract} is unrelated to the vertex weight function~$w$. The same holds for Lemma~\ref{l-bt}. Hence we may still use both lemmas. In particular this implies that Lemma~\ref{l-2} still holds.
Lemma~\ref{l-double} is a structural lemma that is unrelated to the vertex weight function~$w$, so we can safely use it. By these observations and Theorem~\ref{t-ind2weight}, we can now follow the same arguments as used in
the proof of Theorem~\ref{t-main}. This proof is based on brute force searching. The only thing we need to do is to remember the smallest weight of the vertex covers found during the execution of the algorithm instead of their sizes.
\qed
\end{proof}

\section{Conclusions}\label{s-con}

We proved that {\sc (Weighted) Connected Vertex Cover} is polynomial-time solvable for $(sP_1+P_5)$-free graphs for every integer $s\geq 0$. 
We finish our paper by posing the following two open problems.

\begin{enumerate}
\item  What is the complexity of {\sc Connected Vertex Cover} for $P_6$-free graphs?
\item Does there exist an integer~$r$ such that {\sc Connected Vertex Cover} is \NP-complete for $P_r$-free graphs?
\end{enumerate}
For Question~1, it might be easier to consider first the class of $(P_2+P_3)$-free graphs, for which we do not know the complexity of {\sc Connected Vertex Cover} either.
For Question~2, we need a better understanding of $P_r$-free graphs. The {\sc Connected Vertex Cover} problem belongs to a range of problems which we only know to be polynomial-time solvable on $P_r$-free graphs up to some value of~$r$. 
These problems include {\sc Vertex Cover}, {\sc Feedback Vertex Set}, {\sc Connected Feedback Vertex Set}, {\sc Independent Feedback Vertex Set}, {\sc Odd Cycle Transversal}, {\sc Connected Odd Cycle Transversal}, {\sc Independent Odd Cycle Transversal}, $3$-{\sc Colouring} and {\sc (Dominating) Induced Matching}, see~\cite{BDFJP17,GJPS17} for further details. Even our understanding of bipartite $P_r$-free graphs is limited. For instance, we only know that {\sc Hypergraph 2-Colourability} is polynomial-time solvable on $P_7$-free incidence graphs (which are bipartite)~\cite{CS16}.

\medskip
\noindent
{\it Acknowledgements.} We thank an anonymous reviewer of the conference version of our paper for helpful comments.

\end{document}